\documentclass{article}
\usepackage{preprint,times}
\usepackage{amsmath,amsfonts,amssymb,amsthm}
\usepackage{booktabs}
\usepackage{graphicx}
\usepackage{multirow}
\usepackage{placeins}
\usepackage{hyperref}
\usepackage{url}

\usepackage{amsmath,amsfonts,bm}

\def\eqref#1{equation~\ref{#1}}

\def\1{\bm{1}}

\DeclareMathAlphabet{\mathsfit}{\encodingdefault}{\sfdefault}{m}{sl}
\SetMathAlphabet{\mathsfit}{bold}{\encodingdefault}{\sfdefault}{bx}{n}

\newtheorem{proposition}{Proposition}
\newtheorem{corollary}{Corollary}

\graphicspath{{figures/}}

\title{LLM-as-a-Judge Version Shift:\\ The Instrument Changed, Not the Standard}

\author{Guijia Zhang \\
\normalfont The Hong Kong University \\
of Science and Technology
\And
Harry Yang \\
\normalfont The Hong Kong University \\
of Science and Technology
}

\newcommand{\dver}{\Delta_{\mathrm{ver}}}

\preprintcopy

\begin{document}

\maketitle
\lhead{Preprint}

\begin{abstract}
Automated research pipelines score proposals with an LLM reviewer and replace that reviewer when a newer checkpoint ships. A common planning assumption is that the new judge is harsher. We test that assumption in a preregistered, weight-frozen design. Three same-family open-weight transitions, Qwen, Gemma, and Mistral, score a frozen pool of ML proposals. The assumption does not hold. Qwen~$3.5\!\to\!3.6$ moves toward leniency on anchors and on the pool, and that shift survives FDR correction. Gemma~$3\!\to\!4$ is strict on constructed anchors and attenuates on the pool, so its direction is only suggestive. Mistral-Small~$3.1\!\to\!3.2$ also has a more lenient mean. That mean is the wrong summary: $93\%$ of pool scores land on a single value, and $59\%$ of the top-$30\%$ selection set changes. Giving both versions the same low-level facts, or the same defect certificates, does not close the remaining gap. A matched placebo attributes the small fact effect to packet format rather than content. Global monotone recalibration repairs the mean and almost none of the object-level dispersion. The outcomes fall into three failure classes of a measurement function: a moved origin, a change in object order, and a loss of resolution. Each class requires a different action. A stratified anchor audit of about $200$ calls classifies the class before the new judge scores anything that matters. Thresholds frozen from that protocol issue the matching diagnosis on two later API pairs. We release six preregistration locks and $14{,}882$ judge calls.
\end{abstract}

\section{Introduction}
\label{sec:intro}

Pipelines that use an LLM as a reviewer eventually replace that reviewer. The prompts, the score threshold, and the rest of the stack stay put; only the weights change. The change is treated as an upgrade. In a frozen replication of one such swap, Mistral-Small~$3.1\!\to\!3.2$, $59\%$ of the top-$30\%$ selection set turns over and the new judge assigns $93\%$ of proposals the same score. The pipeline still returns numbers. It is no longer measuring.

The usual explanation is that newer models are stricter. That story now informs planning, from leaderboard maintenance to autonomous research systems that iterate proposals against a judge \citep{alpharesearch2025,lu2024aiscientist}. Recency does not imply a direction. A mean shift is also the wrong summary when the instrument has stopped discriminating.

\begin{figure}[t]
\centering
\includegraphics[width=\textwidth]{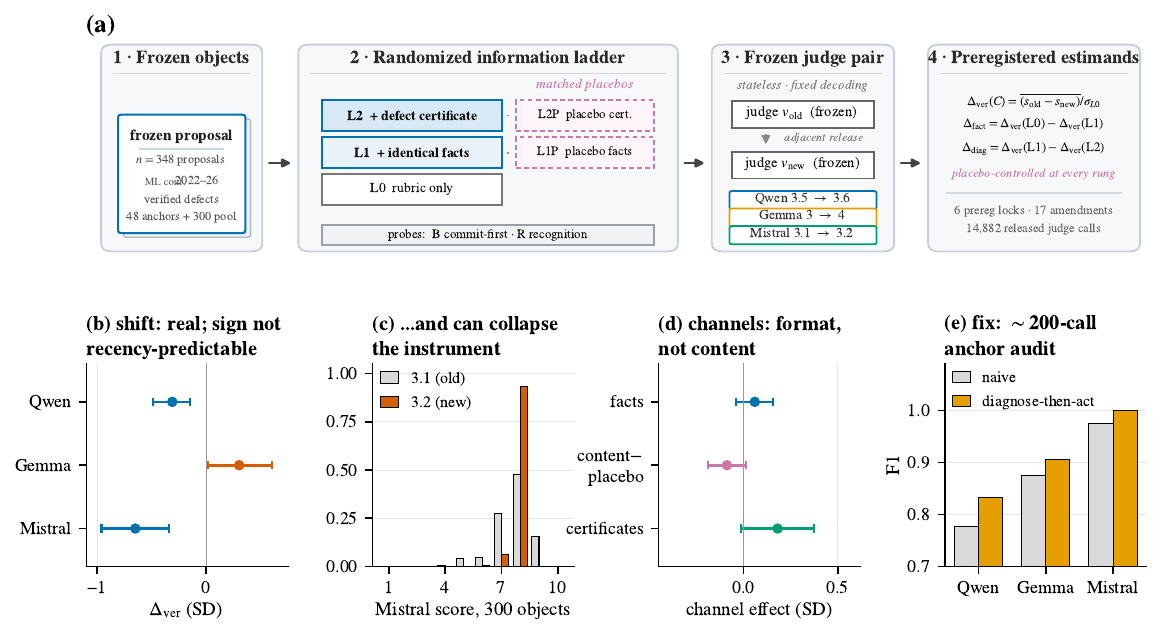}
\vspace{-4pt}
\caption{\textbf{Overview.} Panel a is the design. Frozen proposals with reviewer-verified defects are scored by weight-frozen adjacent judge versions in three families, under a randomized information ladder with matched placebos at every rung. Panels b to e are the findings. Qwen moves toward leniency and replicates on the pool. Gemma is strict on anchors and attenuates off-anchor, as Table~\ref{tab:pool} reports. Mistral~$3.1\!\to\!3.2$ concentrates $93\%$ of pool scores on a single value. Fact effects track placebos, so the active ingredient is format rather than content. A $\sim$200-call anchor audit classifies the failure before deployment, as Table~\ref{tab:protocol} records.}
\label{fig:overview}
\vspace{-9pt}
\end{figure}

The closest prior work already treats evaluator replacement as a measurement problem \citep{judgechanges2026}. That audit compares capability tiers and served APIs on pairwise judgment datasets and shows that upgrades are not interchangeable. It does not freeze adjacent same-family revision hashes, does not randomize the information shown to both versions, and does not separate a mean shift from a rank reversal or a collapse of resolution. Those distinctions decide the fix. Recalibrating a location shift is cheap. Recalibrating a reordering cannot recover the pairs that traded places. Recalibrating a collapsed score channel is vacuous.

We treat a judge version as a measurement function. When the weights change, three things can move independently: the origin of the scale, the direction in which criteria are combined, and the resolution of the score map. Each movement has a distinct observable and a different repair. A reliability audit that reports only a mean cannot tell them apart. Section~\ref{sec:idea} states the model. The experiments then do two jobs that a mean-only audit cannot. They freeze the objects and the weights, so the score difference is a property of the instrument. They give both versions the same information, so a residual gap cannot be explained as a difference in what the two models already know. Figure~\ref{fig:overview} summarizes the design and the outcomes.

\textbf{Design.} We freeze three same-family open-weight transitions and a pool of ML proposals with reviewer-verified defects \citep{soundnessbench2026}. Each judge scores each object under randomized, stateless conditions on an information ladder: rubric only, identical frozen facts, and a minimal defect certificate, with a matched placebo at every rung. Every stage was locked before any judge output of that stage was generated \citep{nosek2018}. Section~\ref{sec:design} gives the protocol.

\textbf{Findings.}
\begin{enumerate}
\item \textbf{Recency predicts neither sign nor quality.} Qwen~$3.5\!\to\!3.6$ moves toward leniency by about $0.3$~SD on both anchors and the pool. Mistral~$3.1\!\to\!3.2$ collapses the score channel: $93\%$ of pool scores sit on one value. Gemma~$3\!\to\!4$ is strict on constructed anchors and attenuates on the pool, so an anchor-only audit can overstate a shift. Three transitions are not a population of drift. Section~\ref{sec:transitions}.
\item \textbf{The residual gap is not a knowledge gap.} Identical facts close at most $0.14$~SD, and a placebo attributes even that to packet format. Certificates help one chain and fail a preregistered replication. A global map can repair a mean and absorbs under $8\%$ of object-level dispersion. On Gemma the pool gap flips sign across quality tertiles. That pattern is what a global severity parameter cannot represent. Section~\ref{sec:decomposition}.
\item \textbf{A small in-domain audit can classify the class.} About $200$ stratified anchor calls ask, in order, whether resolution collapsed, whether ranks moved beyond same-version noise, or whether only the origin moved. The actions are to withhold the new judge, to rescore, or to calibrate. The thresholds are a protocol proposal, formalized after stage-2 data existed. On two later API pairs the frozen rule issues the matching diagnosis. Section~\ref{sec:deployment}.
\end{enumerate}
Recognition and revision-loop audits are in the appendix. Mistral~$3.2$ collapses on proposals and does not on code review: reliability is a property of the operating point, not of the model name.

The contributions are a preregistered measurement that falsifies the harshness rule in both direction and quality; a placebo-controlled information ladder that bounds the usual contamination and context objections; a model that separates origin, order, and resolution failures; and a stratified anchor-audit protocol whose preregistered nulls are reported as exclusion bounds \citep{schuirmann1987,lakens2017}.

\section{Related work}
\label{sec:related}

\textbf{Judge replacement.} \citet{judgechanges2026} is the closest prior audit. They document evaluator-replacement ambiguity on four pairwise judgment datasets, comparing Qwen3 scaling from $1.7$B to $32$B with adjacent MiniMax API releases. Upgrades are not interchangeable. Only the $1.7$B to $4$B step is a robust accuracy gain, and adjacent API releases are not. They recommend dataset slices, bias probes, and protocol logs.

Four design choices separate that audit from this one. They compare capability tiers and served APIs. We freeze revision hashes on adjacent same-family pairs, including two same-size patches. They hold the item fixed and vary the judge. We also randomize the information shown to both versions, with a placebo at every rung, so a residual gap is not a gap in context. They do not report a scoring-channel collapse with intact rationales. An accuracy delta would miss Mistral~$3.2$. They do not split remediations by failure class.

Score instability over time and across versions is also documented in scientific ideation and production judges \citep{scijudgedrift2025,shelflife2025,diagaudit2026}. Related work adds anchor-based attribution \citep{whodrifted2026} and psychometric reliability audits \citep{relwithoutval2026,zheng2023judging}. Those studies establish that drift exists. They do not say what the gap is made of, whether supplying the same facts closes it, or which repair is even possible.

\textbf{Contamination.} Relatedness between a generator and an evaluator can leak preference \citep{li2025preferenceleakage}. Object-in-training-data contamination of judges has been a stated worry without a measurement. Our recognition probe with title verification measures an observable trace and bounds its contribution. Recognition is not membership evidence, and we do not treat it as such.

\textbf{Cue interventions.} Placebo-controlled cue studies exist for judge rationalization \citep{faithfulfabricated2026}. Noise schedules \citep{nrc2026}, reference conflicts \citep{judgingreference2026}, and injected text \citep{promptinjection2025} all move scores. Our packets are randomized, non-evaluative, and provenance-carrying, with matched placebos, because the estimand is a channel effect rather than attack success. Three of our own effects would have been over-claimed without those arms.

\textbf{Calibration and anchors.} Many-facet models fit severity differences after the fact \citep{irtjudge2026,rasch2026}. Fixed-parameter calibration extends benchmarks over time \citep{habba2026growingpains}. Anchor choice itself moves rankings \citep{donyehiya2026mediocrity}. Corrector choice is an anchor-budget question \citep{debiastransport2026}. The simplification those tools share is that a two-parameter map of old scores onto new ones is enough. That is what we test and reject off-anchor. Fitting a mean is not recovering a ranking.

\textbf{Overoptimization.} Imperfect proxies can be farmed \citep{gao2023scaling,moreconvincing2026,evalnotopt2026,legalqa2026}. At a stateless, six-round operating point we do not detect moderate-to-large judge-specific residue. Appendix~\ref{app:revision} reports the bounds. Shared-context optimizers remain outside the tested envelope \citep{pan2024spontaneous}.

\section{A measurement model of version shift}
\label{sec:idea}

Write a judge version $v$ as a map from an object $x$ to a score
\begin{equation}
s_v(x) = g_v\!\left(w_v^{\top} z(x) + b_v\right),
\label{eq:response}
\end{equation}
where $z(x)$ collects latent criterion-level assessments of $x$, including soundness, evidence, and novelty; $w_v$ weights those criteria; $b_v$ is an offset; and $g_v$ is strictly increasing. Replacing the weights can move any of the three parameters. Each movement has a distinct observable, and each requires a different action. Table~\ref{tab:classes} records the correspondence. A reliability audit that reports only a mean cannot tell the three classes apart.

If only $b_v$ moves, every object shifts by the same amount on the latent scale. The mean score changes. Ranks at a fixed threshold may move near the cut, but the order of objects is preserved. A monotone recalibration of old scores recovers the new scores. This is an \emph{origin} change.

If the direction of $w_v$ changes, objects are reweighted and some pairs reverse order. No function of the old score alone can restore those pairs: two objects with the same old score can carry different new scores. This is an \emph{order} change. Appendix~\ref{app:theory} records a sufficient condition under which a global monotone map $s_{\mathrm{new}} = h(s_{\mathrm{old}})$ exists. We do not treat that condition as identified. $z(x)$ is unobserved, so the data classify the kind of change, not which criterion weights moved.

If the slope of $g_v$ flattens over the realized range, discrimination collapses. Upstream channels can remain responsive. The model can still list defects and write object-specific rationales while emitting a constant scalar. Recalibrating a constant is vacuous. This is a \emph{resolution} failure. The instrument has to be withheld, or the operating point changed.

These three readings are an idealization. What the design identifies is the class of change, through the signatures in Table~\ref{tab:classes}. A second question is causal rather than descriptive. A gap that is only a difference in what the two versions already know should close when both are given the same facts. A gap that is a change in the response function should not. The information ladder tests that alternative. It starts from a bare rubric, then adds identical frozen facts, then a defect certificate, each with a matched placebo. If the residual after clamping is still object-dependent, the swap changed the instrument, not the evidence.

Section~\ref{sec:transitions} reports which class each transition instantiates. Section~\ref{sec:decomposition} tests the knowledge alternative. Section~\ref{sec:deployment} asks whether a cheap in-domain audit can detect the class before the new judge scores anything that matters.

\begin{table}[h]
\caption{Three failure classes of a version change. The design classifies the class through these signatures. It does not identify the latent weights.}
\label{tab:classes}
\centering
\small
\begin{tabular}{llll}
\toprule
Class & What moves & Signature & Required action \\
\midrule
Origin & $b_v$ & location shift, order intact & calibrate \\
Order & $w_v$ & rank reversals, residual dispersion & rescore \\
Resolution & $g_v$ & score concentration, intact rationales & withhold \\
\bottomrule
\end{tabular}
\end{table}

\section{Experimental design}
\label{sec:design}

The design has two jobs: classify each transition against Table~\ref{tab:classes}, and test whether the gap is a difference in what the two versions already know.

\textbf{Judges.} Three same-family open-weight transitions are frozen by revision hash: Qwen~3.5-27B$\to$3.6-27B, Gemma~3-27B$\to$4-31B, and Mistral-Small~3.1-24B$\to$3.2-24B. The Qwen and Mistral pairs are same-size patch revisions. Gemma is a generation jump with a size change. We treat all three as the kind of replacement a pipeline actually performs, not as exchangeable effect sizes.

Decoding is fixed at temperature $0.6$ and top-$p$ $0.95$, with per-call CRC seeds. That setting matches a common deployment choice rather than greedy decoding. Prompt-repeat reliability is quadratic-weighted $\kappa = 0.86$--$0.93$. A later check shows the Mistral collapse is stable across temperature. A pool-scale second pass separates sampling noise from version change. Section~\ref{sec:deployment} and Appendix~\ref{app:api4th} report those checks.

\textbf{Objects.} The pool has 348 frozen proposal-style objects derived from ML conference submissions from 2022 to 2026, with reviewer soundness scores and verified risk factors \citep{soundnessbench2026}. They fall in three disjoint sets. The first is 48 anchors, of which 12 are clean, 12 are flawed, and 24 are natural. The second is an 80-object intervention pool. The third is a 220-object expansion pool spanning five submission years. Full pool means the 300 non-anchor objects. The 62 masked-title variants are rewrites of these objects, not additional samples. All bootstraps cluster at the object level. Appendix~\ref{app:accounting} gives the call counts.

\textbf{The ladder.} For each object we issue stateless randomized calls. Condition B is commit-first defect listing with no score, following \citet{moreconvincing2026}. Condition R is a title-verified recognition probe. Condition L0 is the rubric only. Condition L1 adds a frozen atomic-fact packet, built by a neutral third family and identical for both judges. Condition L1P is a matched placebo packet. Condition L2 adds a minimal defect certificate distilled from reviewer risk factors and leakage-filtered. Condition L2P is a donor-mismatched certificate matched for length, style, and authority cues. Qwen and Gemma run the full ladder. Mistral was added at stage~2 as a third-family replication and runs L0, R, B, L2, and L2P, with no fact conditions.

For condition $C$, let $\dver(C) = \overline{(s_{\mathrm{old}} - s_{\mathrm{new}})}\,/\,\sigma_{L0}$ denote the standardized version gap under $C$, an object-clustered mean, where $\sigma_{L0}$ is the transition's pooled L0 SD. Positive $\dver$ means the new judge is stricter. The preregistered estimands are $\dver \equiv \dver(\mathrm{L0})$, $\Delta_{\mathrm{fact}} = \dver(\mathrm{L0}) - \dver(\mathrm{L1})$, and $\Delta_{\mathrm{diag}} = \dver(\mathrm{L1}) - \dver(\mathrm{L2})$.

\textbf{Preregistration.} Six locks, one per stage, freeze configs, prompts, code hashes, kill gates, and analysis scripts before any judge output of that stage. Seventeen amendments are logged in Appendix~\ref{app:amendments}. Each was documented before unblinding the affected endpoint. Amendments extend scope or add replications; they do not redefine a confirmatory endpoint after unblinding. Every test is registered as confirmatory or exploratory in Table~\ref{tab:fullregistry}. Negative results carry TOST exclusion bounds and 80\%-power minimum detectable effects. Appendix~\ref{app:equivalence} reports those bounds.

\FloatBarrier
\section{Three transitions: two shifts and a collapse}
\label{sec:transitions}

Table~\ref{tab:pool} and Figure~\ref{fig:transitions} classify the three transitions against the model of Section~\ref{sec:idea}. Throughout, shift refers to the location estimand $\dver$.

\begin{figure}[ht]
\centering
\includegraphics[width=0.78\textwidth]{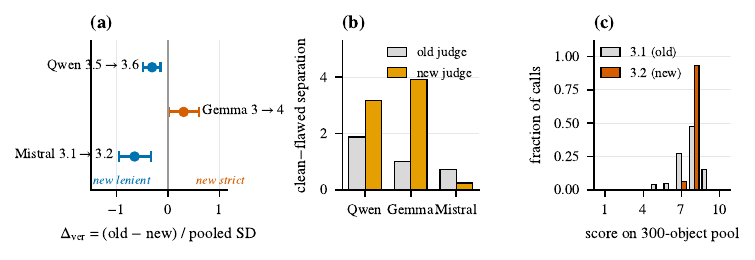}
\vspace{-4pt}
\caption{\textbf{Version replacement re-instruments the measurement.} Panel a reports location shifts with cluster-bootstrap 95\% CIs. The Qwen shift toward leniency survives FDR correction. Gemma's direction toward strictness is suggestive. The Mistral mean also moves, but panel c is the finding that matters on that chain. Panel b reports clean--flawed discrimination, which improves in two transitions and collapses in the third. Panel c shows the Mistral~3.2 collapse on the 300-object full pool, where $93\%$ of scores land on a single value.}
\label{fig:transitions}
\vspace{-9pt}
\end{figure}

\begin{table}[ht]
\caption{Transition endpoints, anchors versus full pool, in SD units with object-clustered bootstrap 95\% CIs. Positive $\dver$ means the new judge is stricter. Pool is the 220-object expansion pool scored once per judge at L0. For Mistral the pool is 300 objects. Anchor and pool gaps are each standardized by their own pooled L0 SD. Sep.\ is clean--flawed separation. Rank columns quantify transition-induced rank instability. Section~\ref{sec:deployment} returns to those columns.}
\label{tab:pool}
\centering
\scriptsize
\setlength{\tabcolsep}{3pt}
\begin{tabular}{lrrrrrr}
\toprule
Transition & $\dver$ anchors [95\% CI] & sep. & $\dver$ pool [95\% CI] & $\tau$ & top-30\% flip & ties \\
\midrule
Qwen 3.5$\to$3.6 & $-0.310$ $[-0.487, -0.146]$ & $1.88 \to 3.17$ & $-0.332$ $[-0.432, -0.236]$ & $0.613$ & $37\%$ & $18\%$ \\
Gemma 3$\to$4 & $+0.303$ $[+0.014, +0.607]$ & $1.00 \to 3.92$ & $+0.110$ $[-0.034, +0.258]$ & $0.506$ & $39\%$ & $26\%$ \\
Mistral 3.1$\to$3.2 & $-0.650$ $[-0.961, -0.339]$ & $0.73 \to \mathbf{0.24}$ & $-0.450$ $[-0.616, -0.290]$ & $0.175$ & $59\%$ & $86\%$ \\
\bottomrule
\end{tabular}
\end{table}

Qwen is predominantly an origin change. It moves toward leniency by $0.31$~SD on anchors and $0.33$~SD on the pool. Both intervals exclude zero.

Mistral is a resolution failure. On the pool, $93\%$ of scores sit on one value, clean--flawed separation falls from $0.73$ to $0.24$, and $86\%$ of pairs are tied. The SD-standardized mean is $-0.65$ on anchors and $-0.45$ on the pool. That number is a fragile summary of a collapsed channel, so we lead with concentration and rank statistics rather than with leniency. The collapse is a finding about instrument risk. That chain is excluded from the calibration claims of Section~\ref{sec:deployment}. Only its degradation flag enters the protocol table.

Gemma is strict on the constructed anchors, $+0.30$~SD, and much smaller on the pool, $+0.11$. The pool interval includes zero. We treat the direction as suggestive.

The harshness claim is universal, so the two FDR-surviving shifts toward leniency already refute it. We do not claim that leniency is typical.

\textbf{Anchor estimates are audit statistics, not pool estimates.} Anchors are a constructed set, so every headline gap is also reported on the expansion pool, and for Mistral on the 300-object full pool, sampled without regard to defect status. Table~\ref{tab:pool} and Appendix~\ref{app:pool} give the numbers. Qwen's magnitude agrees. Gemma's attenuation is the reason we call its direction suggestive, and a warning that an anchor-only audit can overstate a shift.

\textbf{Multiplicity.} As a sensitivity analysis we apply Benjamini--Hochberg FDR correction across the eleven scalar confirmatory endpoints. Appendix~\ref{app:robustness} lists them. The Qwen location shift survives, $p < 10^{-3}$. The Mistral mean shift also survives, but we do not read it as a leniency finding. Rank statistics on the anchors show $46\%$ exact ties, rank-biserial $0.74$, and Wilcoxon $p = 4 \times 10^{-4}$. Gemma's direction toward strictness, $p \approx 0.045$, does not survive FDR, matching the pool attenuation. No other conclusion changes.

\textbf{The collapse is in the scalar score, not in the harness or the reading.} Three audits localize Mistral~3.2. Appendix~\ref{app:robustness} reports them in full. The harness is intact: $0$ of $252$ pool calls fail to parse, output lengths are ordinary, and there is no repetition pathology. Comprehension is intact: justifications are object-specific, with mean pairwise 3-gram Jaccard $0.047$. The model reads each proposal, writes a tailored rationale, and then emits the same score. Discrimination is also intact on the commit-first B probe. Mistral~3.2's defect counts separate clean from flawed anchors by $0.55$~SD, Cliff's $\delta = 0.30$, numerically better than its predecessor at $-0.27$~SD. The mapping to a scalar is what broke.

A preregistered perturbation matrix, Table~\ref{tab:e1}, completes the attribution. The collapse is stable across decoding temperature. It partly recovers under a semantically equivalent prompt paraphrase, with SD rising from $0.31$ to $0.53$, still at least $42\%$ below the predecessor. The same paraphrase leaves the predecessor unchanged. The saturation binds to prompt surface form. Only an in-domain audit will show which operating point still works.

The collapse is also domain-specific. On $48$ code-review objects the same pair moves toward leniency by $-0.79$~SD with discrimination intact. Appendix~\ref{app:domain2} reports that study.

The sign of a working instrument's mean shift tracks where the \emph{old} judge was miscalibrated. Old Qwen under-scored clean objects. Old Gemma over-scored flawed ones. Reading that sign requires anchored ground truth, which is why Section~\ref{sec:deployment} scores a frozen in-domain set before the new judge is trusted.

\section{What does not close the gap}
\label{sec:decomposition}

\begin{figure}[t]
\centering
\includegraphics[width=0.6\textwidth]{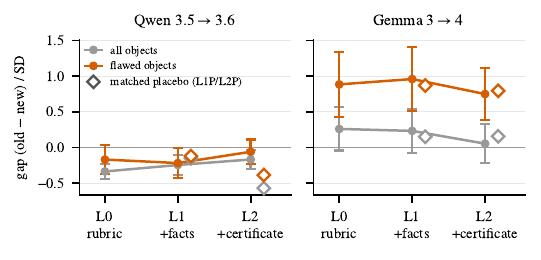}
\vspace{-4pt}
\caption{\textbf{The information ladder.} Version gap, old minus new, in pooled-SD units, with means and 95\% CIs, under L0, L1, and L2. L0 is the rubric. L1 adds identical frozen facts. L2 adds a defect certificate. Open diamonds mark the matched placebos at each rung. Facts barely move either chain's gap, and the placebos track the real interventions. The placebo-controlled contrast attributes what little the packets do to format, not content, as Table~\ref{tab:e2} reports. Certificates shrink the gap in Qwen but fail a preregistered Mistral replication, as Table~\ref{tab:certs} reports.}
\label{fig:ladder}
\vspace{-9pt}
\end{figure}

Section~\ref{sec:idea} left a causal alternative: the gap could be a difference in what the two versions already know. The information ladder tests that alternative. If identical facts or identical certificates restored the old scores, the swap would be an evidence gap rather than an instrument change.

\textbf{Non-evaluative fact supply does not close the gap.} Clamping identical frozen fact packets moves the pooled gap by $\Delta_{\mathrm{fact}} = +0.060$~SD. The interval is $[-0.039, +0.158]$, on $128$ paired objects on the two fact-bearing chains, sign-aligned. Equivalence testing excludes effects of $0.143$~SD or larger. Identical low-level context is not a fix.

\textbf{The active ingredient is format, not content.} Irrelevant placebo facts already perturb Qwen anchors as strongly as real ones, $0.33$ versus $0.04$~SD, so L1 confounds content with packet form. A preregistered placebo-controlled extension separates the channels. L1P is placed on every L1-covered object. The sample is $128$ paired objects. Table~\ref{tab:e2} reports the contrast. The content-specific effect is null, pooled $-0.088$~SD, interval $[-0.190, +0.012]$. TOST excludes effects of $0.174$~SD or larger. A placebo packet alone moves the gap toward zero by $0.11$--$0.17$~SD, significantly on Qwen and numerically more than the real facts do.

The packets are model-extracted, leakage-filtered atomic facts. They lexically cover about $5\%$ of reviewer-verified risk factors, against $29\%$ for certificates. Appendix~\ref{app:artifacts} details the construction. What the data rule out is that low-level facts or auto-generated certificates generally close the gap. Expert-curated evidence remains untested.

\textbf{Certificates do not generally bypass.} Table~\ref{tab:certs} carries the numbers. The preregistered primary is pooled $\Delta_{\mathrm{diag}}$ on flawed objects. It is inconclusive under our own bar, and the placebo gate did not trigger. An exploratory paired contrast that cancels object noise came out significant. A preregistered third-family replication on Mistral returned null. That replication used the same 33 flawed objects, with sign, scale, and a one-sided test frozen in advance, and the manipulation check passed. The certificate channel is chain-specific at best. Mismatched pseudo-certificates still depress clean-object scores in every judge, three times more strongly in old Qwen, $-0.95$ versus $-0.30$ raw. Version replacement changes the response to authoritative-sounding text, not just severity.

\textbf{Prior exposure is the weakest finding.} Title-verified recognition differs across versions. New Qwen recognizes $29\%$ of objects, against $12\%$ for old Qwen. Gemma~3 claims recognition it cannot verify $36\%$ of the time, so self-report is not an audit. Verified recognition establishes prior exposure or reconstruction, not membership. Recognized objects carry an excess gap of $0.155$~SD under L0, shrinking to $0.04$~SD under fact-clamping. The strata are observational and imbalanced. Recognized objects are older and higher-scored, with SMD $0.49$--$0.73$. Appendix~\ref{app:balance} reports the balance table.

Three checks are consistent with an information advantage and no more. The excess persists under placebo packets. Gap dispersion is comparable across strata. A masked-title difference-in-differences is directionally consistent but underpowered, $+0.095$, and the interval contains zero.

\textbf{Global maps repair the mean, not the objects.} After clamping, about $77\%$ of both chains' gaps survive. We treat what follows as residual attribution, not mechanism identification. The preregistered two-parameter model is $s_{\mathrm{new}} \approx \gamma\, s_{\mathrm{old}} + \delta$, fitted on 24 anchors and required to explain at least two thirds of the off-anchor mean gap. It failed on both chains. Qwen overshoots, explained fraction $-0.15$. Gemma extrapolates, $\gamma = 2.5$, explained fraction $-12.5$. Part of the Gemma failure is an anchor artifact. The old judge compresses anchors into a two-point range, so the slope is unidentified \citep{donyehiya2026mediocrity}.

Refitting on random subsets of the 220-object pool, 200 draws at $k = 24$, $40$, and $80$, shows the failure is not only about anchors. On Qwen the mean becomes correctable, with median explained fraction $0.76$--$0.81$ for $k \geq 40$. On Gemma the two-thirds gate still fails in $77\%$ of draws at $k = 80$. On both chains a linear map fitted on the full pool absorbs less than $8\%$ of object-level gap dispersion. Appendix~\ref{app:refit} reports the refit.

The Gemma pool gap itself flips sign across reviewer overall-score tertiles. The gap is $+0.80$~SD on weak objects, $-0.20$ in the middle, and $-0.43$ on strong ones. Qwen is uniformly lenient, at $-0.23$, $-0.37$, and $-0.41$. A global severity parameter cannot represent a sign flip of that kind. Appendix~\ref{app:pool} reports the subgroup table.

These are the signatures of Section~\ref{sec:idea}. Qwen is an origin-dominant shift whose mean is repairable off-anchor and whose objects are not. Gemma is an order change. The tertile reversal is the signature that a global map cannot represent. Mistral remains a resolution failure. The scalar channel collapsed while rationales and defect lists did not. Rank reversals exceed the same-version noise floor on every chain, as Section~\ref{sec:deployment} reports. Where the Gemma gap is reliable, split-half $r = 0.81$, and observable features predict $46\%$ of its variance. Compressing length or stripping rhetoric leaves the gap unchanged. Appendix~\ref{app:structural} has those audits. The change binds to content properties that survive both edits. Estimating $\Delta w$ would require per-criterion interventions with matched placebo edits. Section~\ref{sec:limitations} returns to that experiment.

\FloatBarrier
\section{A pre-deployment compatibility test}
\label{sec:deployment}

The three classes in Section~\ref{sec:idea} need three different actions. Table~\ref{tab:protocol} is the protocol we propose. The questions are asked in that order. A single drift statistic conflates failures that require opposite remediations.

\begin{table}[h]
\caption{Pre-deployment compatibility test on stratified in-domain anchors, aligned with Table~\ref{tab:classes}. Thresholds were set by convention and formalized after stage-2 data existed. They are a protocol proposal. Appendix~\ref{app:loto} reports a leave-one-transition-out check. That check shows insensitivity, not external validity. The two API pairs test the diagnosis, not the realized payoff of every prescribed action.}
\label{tab:protocol}
\centering
\scriptsize
\setlength{\tabcolsep}{3.5pt}
\begin{tabular}{llll}
\toprule
Class & Ask & Trigger & Action \\
\midrule
Resolution & Score channel collapsed? & modal share $>0.5$ or sep.\ $<0.30$ & Do not deploy \\
Order & Ranks moved? & $\tau$ below both test--retest floors & Rescore or freeze old \\
Origin & Origin only? & boundary disagreement $>10\%$ & Calibrate on anchors \\
\bottomrule
\end{tabular}
\end{table}

\begin{figure}[h]
\centering
\includegraphics[width=0.48\textwidth]{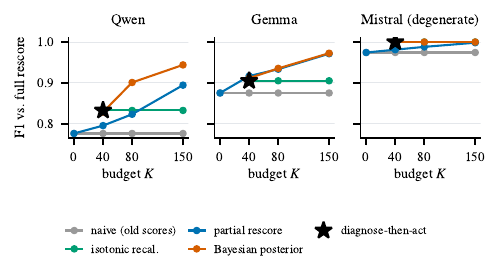}
\vspace{-4pt}
\caption{\textbf{Equal-budget controller at a version transition.} F1 is measured against the full-rescore policy as a function of new-judge call budget $K$. Stars mark the diagnose-then-act rule of Table~\ref{tab:protocol} at the anchor budget. The Mistral panel is degenerate and is excluded from calibration claims.}
\label{fig:controller}
\vspace{-9pt}
\end{figure}

Separation alone is not a reliable flag at the boundary. Anchor resampling leaves Qwen and Gemma untouched but puts collapsed Mistral near the $0.30$ threshold, with bootstrap flag rate $51\%$. Score concentration sits far from any boundary: $93\%$ of scores on a single value and $86\%$ tied pairs. That is why Table~\ref{tab:protocol} pairs the two flags.

\textbf{Scalar maps and rank pipelines.} Fitted on 40 anchors and evaluated on the non-anchor pool, ordinal probit improves Qwen MAE from $0.833$ to $0.811$ and CRPS from $0.65$ to $0.57$. On Gemma every mapping underperforms identity, as the tertile reversal predicts. Monotone maps leave within-chain Kendall $\tau$ unchanged, so a rank-based pipeline can skip scalar recalibration of a \emph{fixed} judge. The exemption does not cover the transition. Replacing the judge reorders objects. Table~\ref{tab:pool} reports $\tau = 0.61$, $0.51$, and $0.18$, and $37$--$59\%$ of the top-$30\%$ set flips.

A second L0 pass over every pool object measures the sampling floor. Same-version test--retest $\tau$ is $0.68$--$0.82$, and $0.34$ for pre-collapse Mistral. Cross-version $\tau$ falls below both endpoints on all three chains. Version change adds $+12$, $+35$, and $+14$ flip points beyond the same-version mean for Qwen, Gemma, and Mistral. Appendix~\ref{app:api4th} reports the floor. Rank pipelines still have to audit rank stability, boundary crossings, and discrimination at each swap.

\textbf{Equal-budget migration.} Figure~\ref{fig:controller} selects the new-judge top-$30\%$ under a fixed rescoring budget. On Qwen, isotonic anchor recalibration lifts F1 from $0.777$ to $0.833$ and cuts regret by $40\%$. Regret is the shortfall in mean reviewer soundness against the best same-size selection. Beyond the anchor budget a Bayesian posterior layer reaches F1 $0.901$--$0.944$ at budgets of $80$ to $150$. Isotonic keeps the better regret, $0.32$ versus $0.46$. Fidelity to the incoming judge, measured by F1, and fidelity to external quality, measured by regret, are separate objectives. Best arm refers to F1. On Gemma the naive policy already has regret near zero, because the shift misses the selection boundary. The diagnose-then-act rule in Table~\ref{tab:protocol} matches the post-hoc best F1 arm on the two non-degenerate chains and raises the degradation flag on Mistral. We count only Qwen and Gemma as calibration evidence. Mistral's F1 comparison is trivial.

\textbf{Independent API pairs.} With thresholds frozen, pinned snapshots GPT-5.4 $\to$ GPT-5.5 scored the 48 anchors under the identical L0 prompt. The shift is $0.45$~SD toward strictness, interval $[0.32, 0.57]$. Separation stays intact, $3.52$ to $3.11$. Boundary disagreement is $12.5\%$. The rule triggers calibration and withholds the degradation flag. A pool-scale rescore of 660 calls, with endpoints frozen first, confirms the diagnosis out of sample. Pool $\dver = 0.73$~SD, interval $[0.64, 0.82]$, and disagreement is $33\%$. The payoff is pipeline-dependent. Anchor-fitted isotonic calibration cuts boundary disagreement to $19\%$. A monotone map cannot alter top-$k$ selection, so the frozen fidelity endpoint is null by construction. Appendix~\ref{app:api4th} reports that check.

A second blind pair from a second provider, DeepSeek flash $\to$ pro, is a tier replacement rather than a same-size patch. It returns a $-0.42$~SD shift toward leniency, interval $[-0.60, -0.25]$, with improved separation from $1.78$ to $2.12$. The rule again withholds the degradation flag and triggers calibration. DeepSeek was not rescored at pool scale, so that pair tests diagnosis only.

\textbf{Scope.} The thresholds are an in-simulation demonstration. The two API pairs are the first independent tests of the diagnosis. Pinned snapshots are not weight-frozen, and decoding is provider-default. Those estimates are descriptive. Stress tests of the audit itself, reported in Appendix~\ref{app:structural}, put the minimum viable size near 24 stratified anchors for healthy instruments. Near-threshold trigger verdicts should be escalated to a pool-scale check.

\FloatBarrier
\section{Limitations and outlook}
\label{sec:limitations}

The primary dataset is proposal-style text. A 48-object code-review replication shows that sign and magnitude differ by domain. Appendix~\ref{app:domain2} reports that study. Three preregistered transitions plus two descriptive API pairs refute universal claims and demonstrate the audit. They are not a population estimate of drift. Cross-family replacement on the same pool is at least as disruptive and was not a confirmatory endpoint. Appendix~\ref{app:structural} reports that check.

Criterion assessments are latent, so the geometry reading rests on observed signatures rather than identified weights. Anchor selection remains a live failure mode \citep{donyehiya2026mediocrity}. Pinned API snapshots are not weight-frozen \citep{diagaudit2026}. Fact packets and certificates are model-generated and non-evaluative by design. Appendix~\ref{app:artifacts} states the construction. The revision null is one operating point. Appendix~\ref{app:revision} states the bound. Recognition is prior exposure, not membership. The protocol issues the matching diagnosis on both API pairs and remains unvalidated under cross-family replacement in the field.

Three experiments would close the remaining gaps. Identifying $\Delta w$ needs per-criterion interventions with matched placebo edits. A field test of Table~\ref{tab:protocol} needs a swap that the authors do not choose, including a cross-family replacement. Expert-curated evidence would test whether the fact-channel null is an artifact of model-extracted packets. Those studies are the next paper.

\section{Conclusion}
\label{sec:conclusion}

Replacing an LLM judge changes the instrument, not a scalar severity. Origin, criterion direction, and resolution can move independently, and recency does not pick the sign. Identical facts do not restore the old scores, and a global map cannot recover the objects that traded places. A compatibility test on frozen in-domain anchors classifies the class before the new judge scores anything that matters: withhold when resolution is gone, rescore when ranks have moved, and calibrate only when the origin moved. The instrument changed. The standard does not have to.

\subsubsection*{Reproducibility statement}
All 14{,}882 judge calls (13{,}646 open-weight, 1{,}236 API) are released as JSONL with model identifiers, per-call seeds, and raw outputs, together with six preregistration locks (one per stage: 0a, 0b, 1a, 1b, 2, 3; each a SHA-256 freeze of configs, prompts, and analysis code written before any judge output of that stage), seventeen amendments (A1, B1, B2, C1, D1, E1, E2, F1--F3, G1--G3, G4D, G5--G7; Appendix~\ref{app:amendments}), and single-script regeneration of every table cell. Open-weight judges are pinned by revision hash with deterministic decoding given the released seeds; the API pairs are pinned by dated snapshot or tier ID, with all raw responses released. Auxiliary figures (evaporation scatter, per-stratum rolling MAE, controller regret curves, the B-probe defect-count comparison) ship with the artifacts.

\subsubsection*{Ethics statement}
This study evaluates frozen model snapshots on public research proposals; no human subjects were involved. We discuss dual-use of the masking manipulation (it could be misread as a contamination-evasion recipe) in Appendix~\ref{app:amendments}; our use is diagnostic, with matched controls.

\subsubsection*{AI use statement}
In this work, we used generative AI tools to build experimental infrastructure (orchestration and analysis code), to generate the intervention artifacts that are themselves objects of study (fact packets, certificates, masked titles, and revisions, all frozen and released), and to assist drafting. We did not use generative AI to design the study, select endpoints, or interpret results; retrieval-augmented ideation is not applicable to this work. We have reviewed all AI-assisted work: analysis code was verified against the frozen call logs, every reported number is regenerable by a single released script, and all experimental design decisions, preregistration gates, and claims were checked against the frozen artifacts by the authors. We take responsibility for the final content of this work, including text, claims, and artifacts produced with the aid of generative AI.

\bibliography{references}
\bibliographystyle{preprint}

\appendix

\noindent\textit{Appendix map.} The measurement model and the calibration proposition: Appendix~\ref{app:theory}. Collapse forensics and the temperature/prompt matrix: Appendix~\ref{app:forensics}. Calibration refit and the dispersion ceiling: Appendix~\ref{app:refit}. Recognition-stratum balance: Appendix~\ref{app:balance}. Audit protocol, leave-one-transition-out, and API pairs: Appendix~\ref{app:loto} and Appendix~\ref{app:api4th}. Code-review replication: Appendix~\ref{app:domain2}. Revision-loop nulls: Appendix~\ref{app:revision}. Amendments do not redefine a confirmatory endpoint after unblinding: Appendix~\ref{app:amendments}.

\section{Full test registry and preregistration locks}
\label{app:registry}

Table~\ref{tab:fullregistry} lists every test the project ran, its lock, its confirmatory or exploratory status, and its outcome, including the failures and nulls. It is the single place to check what was promised before data and what was delivered after; the amendments log (Appendix~\ref{app:amendments}) covers every post-lock change.

\begin{table}[h]
\caption{Complete confirmatory/exploratory registry. Lock files: \texttt{prereg\_lock.json} (0a), \texttt{prereg\_lock\_0b.json}, \texttt{prereg\_lock\_1a.json}, \texttt{prereg\_lock\_1b.json}, \texttt{prereg\_lock\_2.json} (contains amendment D1), \texttt{prereg\_lock\_3.json} (amendments E1, E2). Each lock stores SHA-256 hashes of configs, prompts, and analysis code, written before any judge output of that stage.}
\label{tab:fullregistry}
\centering
\scriptsize
\begin{tabular}{p{4.6cm}llp{4.6cm}}
\toprule
Test & Lock & Status & Result \\
\midrule
$\dver$ Qwen (48 anchors, L0$\times$3, cluster bootstrap) & 0a & confirmatory & $-0.310$ $[-0.487,-0.146]$ \\
$\dver$ Gemma (amendment A1) & 0a & confirmatory & $+0.303$ $[+0.014,+0.607]$ \\
$\dver$ Mistral (third transition) & 2 & confirmatory & $-0.650$ $[-0.961,-0.339]$ \\
Reliability (prompt-repeat, quadratic $\kappa$) & 0a & manipulation check & $0.86$--$0.93$ \\
$\Delta_{\mathrm{fact}}$ pooled, paired ($n{=}128$) & 0b & confirmatory & $+0.060$ $[-0.039,+0.158]$; TOST $\pm 0.143$ \\
Recognition excess gap, L0 vs.\ L1 (obs.\ strata) & 0b & confirmatory & $0.155 \to 0.04$ SD \\
L1P placebo-fact sensitivity (anchors, $n{=}24$) & 0b & warning-level & placebo $\approx$ real-fact effect \\
Pooled $\Delta_{\mathrm{diag}}$ flawed ($k{=}4$, inv.-var.\ pooled) & 1a & confirmatory primary & $+0.181$ $[-0.014,+0.375]$, $p{\approx}0.069$ \\
R3 kill gate (placebo ratio $+$ significance) & 1a & confirmatory gate & not triggered ($\Delta_{\mathrm{plac}}=-0.019$) \\
Paired $\Delta_{\mathrm{diag}}-\Delta_{\mathrm{plac}}$ ($n{=}55$) & --- & exploratory & $+0.214$ $[+0.059,+0.368]$ \\
Mistral replication of paired contrast ($n{=}33$, one-sided) & 2/D1 & confirmatory & $-0.082$ $[-0.493,+0.329]$, $p{=}0.66$; no collapse (SD $0.68$) \\
Two-parameter threshold model, $2/3$ gate & 1a & confirmatory & failed: expl.\ frac.\ $-0.15$ (Qwen), $-12.5$ (Gemma) \\
Compliance asymmetry (clean $+$ mismatched cert.) & 1a & secondary & J\_old $-0.95$, J\_current $-0.30$, G\_old/G\_new $-0.27$ \\
Masking DiD (new-only vs.\ neither, paired) & 1a & confirmatory & $+0.095$; manip.\ check $29\% \to 3\%$ \\
Revision $\rho$(sensitivity, evaporation), $n{=}40$ & 1b & confirmatory primary & $0.156$, $p{=}0.34$; TOST $\pm 0.412$ \\
In-loop gain r1$\to$r6 (single-call trajectory) & 1b & manipulation check & $+0.15$ $[-0.23,+0.55]$; fresh-seed $+0.304$ $[+0.096,+0.527]$ \\
Judge-loop $-$ generic control (paired, $n{=}20$) & 1b & confirmatory & $+0.062$ $[-0.174,+0.302]$; TOST $\pm 0.262$ \\
Winner's curse (in-loop $-$ fresh, $n{=}40$) & 1b & confirmatory & $-0.075$ $[-0.300,+0.150]$; TOST $\pm 0.275$ \\
Amendment C1 (Gemma EXP baselines, 114 calls) & 1b & infrastructure & estimation set $21 \to 40$ objects \\
Rolling prediction (6 maps $\times$ 3 chains) & 2 & confirmatory & Table~\ref{tab:rolling} \\
Controller simulation ($+$ meta-policy) & 2 & simulation & Figure~\ref{fig:controller} \\
E1 perturbation matrix (prompt $\times$ temperature, 816 calls) & 3 & confirmatory & partial recovery under frozen rule; Table~\ref{tab:e1} \\
E2 placebo-controlled fact effect (480 calls) & 3 & confirmatory & content null: $-0.088$ $[-0.190,+0.012]$; format real (Table~\ref{tab:e2}) \\
BH multiplicity sensitivity (11 endpoints) & --- & sensitivity & Table~\ref{tab:bh}; Gemma direction suggestive \\
\bottomrule
\end{tabular}
\end{table}

\section{Equivalence tests}
\label{app:equivalence}

For each preregistered null, we report the TOST margin (the smallest $\Delta$ such that the 90\% bootstrap CI lies within $[-\Delta, +\Delta]$) and the minimum detectable effect at 80\% power ($\mathrm{MDE}_{80} = 2.8 \times \mathrm{SE}_{\mathrm{boot}}$) \citep{schuirmann1987,lakens2017}. An interpretive caveat we adopt throughout: because these margins are derived from the data rather than pre-specified as smallest effect sizes of interest (SESOI), they are \emph{exclusion bounds}, statements that effects at least this large are ruled out at $\alpha{=}0.05$, not equivalence declarations in the strict Lakens sense. Main-text phrasing (``TOST excludes $\geq x$'') follows this reading; where a bound exceeds what would be a defensible SESOI (e.g.\ the Mistral replication, $\pm 0.427$), the endpoint should be read as inconclusive rather than null. The bounds are also per-endpoint at the tested operating point (e.g.\ revision depth 6): they do not constrain the slow accumulation of sub-bound residues across repeated pipeline generations.

\begin{table}[h]
\caption{Equivalence analysis of null and inconclusive endpoints (20{,}000 bootstrap draws).}
\label{tab:equivalence}
\centering
\footnotesize
\begin{tabular}{lrrrrr}
\toprule
Endpoint & $n$ & Estimate & 95\% CI & TOST margin & $\mathrm{MDE}_{80}$ \\
\midrule
Certificate $-$ placebo (pooled) & 55 & $+0.214$ SD & $[+0.059, +0.368]$ & $\pm 0.344$ & $0.222$ \\
Mistral replication (D1) & 33 & $-0.082$ SD & $[-0.493, +0.329]$ & $\pm 0.427$ & $0.588$ \\
$\rho$(sensitivity, evaporation) & 40 & $+0.156$ & $[-0.170, +0.454]$ & $\pm 0.412$ & $0.430$ \\
Judge-loop $-$ generic (held-out) & 20 & $+0.062$ SD & $[-0.174, +0.302]$ & $\pm 0.262$ & $0.338$ \\
Winner's curse & 40 & $-0.075$ pts & $[-0.300, +0.150]$ & $\pm 0.275$ & $0.322$ \\
$\Delta_{\mathrm{fact}}$ pooled & 128 & $+0.060$ SD & $[-0.039, +0.158]$ & $\pm 0.143$ & $0.142$ \\
Content-specific fact effect (E2) & 128 & $-0.088$ SD & $[-0.190, +0.012]$ & $\pm 0.174$ & $0.144$ \\
\bottomrule
\end{tabular}
\end{table}

\section{Certificate-channel, threshold-model, and compliance detail}
\label{app:threshold}

This appendix carries the full numbers behind three \S\ref{sec:decomposition} claims: the certificate channel does not survive replication (Table~\ref{tab:certs}), the anchor-fitted two-parameter map fails its preregistered gate (Table~\ref{tab:threshold}), and judges comply asymmetrically with authoritative-sounding but mismatched text (Table~\ref{tab:compliance}).

\begin{table}[h]
\caption{Certificate-channel endpoints (SD units unless noted; discussion in \S\ref{sec:decomposition}). The exploratory contrast that a replication killed is the central methodological exhibit.}
\label{tab:certs}
\centering
\footnotesize
\setlength{\tabcolsep}{4pt}
\begin{tabular}{llll}
\toprule
Endpoint & Status & Estimate [95\% CI] & Verdict \\
\midrule
Pooled $\Delta_{\mathrm{diag}}$, flawed & confirm. & $+0.181$ $[-0.014, +0.375]$, $p{=}.069$ & inconclusive \\
Placebo gate (pooled $\Delta_{\mathrm{plac}}$) & confirm. & $-0.019$ & not triggered \\
Paired $\Delta_{\mathrm{diag}} - \Delta_{\mathrm{plac}}$ & exploratory & $+0.214$ $[+0.059, +0.368]$ & needs repl. \\
Third-family replication (D1) & confirm. & $-0.082$ $[-0.493, +0.329]$, $p{=}.66$ & \textbf{null} \\
Per-chain $\Delta_{\mathrm{diag}}$ & descriptive & Qwen $+0.33$; Gemma $+0.05$; Mistral $-0.08$ & chain-specific \\
\bottomrule
\end{tabular}
\end{table}

\begin{table}[h]
\caption{Two-parameter recalibration fitted on 24 anchors, evaluated on non-anchors. Explained fraction $= 1 - |{\rm corrected\ gap}|/|{\rm raw\ gap}|$; preregistered gate required $\geq 2/3$ on both chains.}
\label{tab:threshold}
\centering
\small
\begin{tabular}{llrrrr}
\toprule
Chain & Model & $\gamma$ & $\delta$ & corrected gap (SD) & explained fraction \\
\midrule
Qwen & linear & $0.950$ & $0.894$ & $+0.201$ (raw $-0.175$) & $-0.154$ \\
Qwen & ordinal probit & $0.949$ & $0.898$ & $+0.203$ & $-0.164$ \\
Gemma & linear & $2.500$ & $-11.67$ & $-0.743$ (raw $-0.055$) & $-12.50$ \\
Gemma & ordinal probit & $2.499$ & $-11.66$ & $-0.743$ & $-12.50$ \\
\bottomrule
\end{tabular}
\end{table}

\begin{table}[h]
\caption{Compliance asymmetry: raw-score shift on \emph{clean} objects when a mismatched pseudo-certificate (L2P) is attached, vs.\ an audit-scope checklist certificate (L2). Old Qwen complies $3\times$ more than new Qwen.}
\label{tab:compliance}
\centering
\small
\begin{tabular}{lrr}
\toprule
Judge & L2P shift (pts) & L2 checklist shift (pts) \\
\midrule
Qwen 3.5 (old) & $-0.95$ $[-1.46, -0.49]$ & $+0.18$ $[-0.02, +0.39]$ \\
Qwen 3.6 (new) & $-0.30$ $[-0.60, -0.02]$ & $+0.15$ $[-0.02, +0.34]$ \\
Gemma 3 (old) & $-0.27$ $[-0.46, -0.08]$ & $+0.04$ $[-0.12, +0.19]$ \\
Gemma 4 (new) & $-0.27$ $[-0.54, 0.00]$ & $+0.27$ $[+0.08, +0.46]$ \\
\bottomrule
\end{tabular}
\end{table}

\section{Rolling prediction detail}
\label{app:rolling}

Table~\ref{tab:rolling} reports the full mapping-by-chain MAE grid behind the rolling-prediction claim of \S\ref{sec:deployment}: calibration helps modestly on Qwen, hurts on Gemma (anchor-to-pool shift exceeds what any within-chain map absorbs), and is meaningless on the degenerate Mistral column.

\begin{table}[h]
\caption{MAE of old$\to$new score mappings fitted on 40 anchors, evaluated on the full non-anchor pool (objects scored by both versions; Qwen $n{=}258$). Kendall $\tau$ is identical across strictly monotone maps within a chain (Qwen $0.629$ on this evaluation set; the $0.613$ of Table~\ref{tab:pool} is the same statistic restricted to the 220-object expansion pool). The Mistral column is an artifact of degenerate new-judge output (93\% single-value) and is excluded from claims.}
\label{tab:rolling}
\centering
\small
\begin{tabular}{lrrr}
\toprule
Mapping & Qwen & Gemma & Mistral$^{\dagger}$ \\
\midrule
identity (no calibration) & $0.833$ & $\mathbf{1.011}$ & $0.631$ \\
global shift & $0.872$ & $1.195$ & $0.796$ \\
linear (anchor) & $0.811$ & $1.203$ & $0.164$ \\
ordinal probit & $\mathbf{0.811}$ & $1.207$ & $0.286$ \\
isotonic & $0.853$ & $1.367$ & $0.095$ \\
quantile map & $0.872$ & $1.329$ & $0.062$ \\
\bottomrule
\end{tabular}
\end{table}

Recognition-stratum MAE differs by $<0.03$ within every chain and mapping; by-year MAE varies by $<0.11$ between 2022 and 2026 objects (largest for post-cutoff 2026). CRPS improves more than MAE under probit calibration on Qwen ($0.65 \to 0.57$).

\section{Revision loop}
\label{app:revision}

If judges have idiosyncratic thresholds, iterative revision against a fixed judge can farm those idiosyncrasies and collapse on the next version \citep{gao2023scaling,legalqa2026}. We preregistered a fair version of that attack (Figure~\ref{fig:nulls}): 40 objects stratified by measured threshold sensitivity (empirical-Bayes--shrunk residual gap after the global map), six rounds of revise-with-judge-feedback (neutral third-family reviser, text-only), a 20-object generic-revision control arm, fresh-seed rescoring to control winner's curse, and held-out evaluation by the other family.

\begin{figure}[t]
\centering
\includegraphics[width=0.62\textwidth]{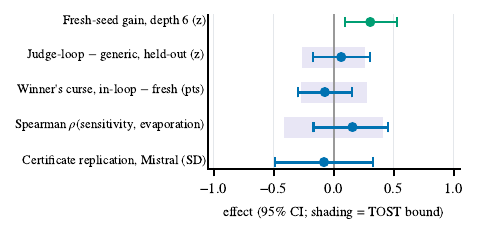}
\caption{Revision loop at a stateless, six-round operating point. Effects with 95\% bootstrap CIs; shaded bands mark the exclusion boundary for preregistered nulls. The positive control (fresh-seed gain) passes. Every exploitation signature is null within the stated bounds.}
\label{fig:nulls}
\end{figure}

Gains are real (fresh-seed depth-6 gain $+0.304$~SD $[+0.096, +0.527]$) but not judge-specific (judge-feedback beats generic revision on held-out judges by $+0.062$~SD $[-0.174, +0.302]$; TOST excludes $\geq 0.262$), not threshold-linked ($\rho$(sensitivity, evaporation) $= 0.156$, $p{=}0.34$, $n{=}40$; TOST excludes $|\rho| \geq 0.41$), and not seed-farmed (winner's curse $-0.075$ points $[-0.300, +0.150]$). Evaporation is the non-transferable share of gain.

The judges are stateless, which removes the shared conversational context that drives in-context reward hacking \citep{pan2024spontaneous}. The claim is scoped: under a stateless, six-round, mild-reviser operating point, we detect no moderate-to-large judge-specific overfitting. The bound excludes judge-specific gains $\geq 0.26$~SD (80\%-power MDE $0.34$). Effects of $0.10$--$0.25$~SD remain compatible. $n{=}40$ powers the correlation test only for $|\rho| \geq 0.43$. Optimizers that share context, run deeper, or accumulate sub-bound residues across generations are outside the tested envelope. An interim $n{=}9$ control contrast looked significant and evaporated at the preregistered full sample.

\section{Robustness analyses}
\label{app:robustness}

All analyses in this appendix use only frozen, released calls; scripts are in \texttt{paper/analysis/rebuttal\_analyses.py} and \texttt{paper/analysis/reviewer2\_analyses.py}, plus the Stage-3 lock (\texttt{prereg\_lock\_3.json}). Analyses added after unblinding (this appendix) are labeled robustness, not confirmatory.

\subsection{Mistral 3.2 collapse: forensics and perturbation matrix (E1)}
\label{app:forensics}

\textbf{Harness audit} (252 pool L0 calls): 0 parse failures; output length mean 460 chars (p10--p90: 364--573); no truncation or repetition degeneration. \textbf{Comprehension audit}: justifications are object-specific, with mean pairwise 3-gram Jaccard overlap $0.047$ (p90 $= 0.094$); the model reads each proposal and writes a tailored rationale before emitting the modal score (93.3\% single value). \textbf{Discrimination audit} (commit-first B probe, 48 anchors): defect counts separate clean from flawed anchors for Mistral~3.2 ($8.0$ vs.\ $8.75$; $0.55$ SD; Cliff's $\delta = +0.30$) but not for Mistral~3.1 ($12.5$ vs.\ $11.7$; $-0.27$ SD): the successor's defect-listing channel is, if anything, sharper. \textbf{Robust estimands} for $\dver$ (48 anchors): median per-object gap $0$ (46\% exact ties); among non-ties the new judge scores higher with rank-biserial $0.74$; Wilcoxon $p = 4\times 10^{-4}$. The mean-based $-0.65$ SD is directionally confirmed by rank statistics but numerically dominated by the non-tied minority, which is why we report both.

\textbf{Perturbation matrix (E1).} Table~\ref{tab:e1} reports the preregistered endpoints. The collapse is fully robust to decoding temperature (between-object SD $0.31$--$0.35$ at $t \in \{0.0, 0.6, 1.0\}$ under the original prompt, vs.\ predecessor reference $0.94$) but sensitive to prompt surface form: a semantically equivalent paraphrase of the same rubric restores part of the spread (SD $0.45$--$0.55$; modal share $0.90 \to 0.50$--$0.73$; clean--flawed separation up to $1.64$) while remaining $\geq 42\%$ below the predecessor in every cell. The same paraphrase leaves the predecessor essentially unchanged (SD $0.87$ vs.\ $0.94$), so the sensitivity is specific to the new model. Under the frozen decision rule this is \emph{partial recovery}: neither a harness artifact nor a configuration-independent property of the weights, but a saturation of the score mapping that is anchored to prompt surface form. This sharpens rather than weakens the deployment lesson: the instrument is defined jointly by weights and prompt, and a version transition can break one operating point while leaving a paraphrase partially functional, which is undetectable without an anchor audit.

\begin{table}[h]
\caption{E1 perturbation matrix: Mistral~3.2 (M\textsubscript{new}) between-object SD of median L0 scores on 48 anchors, by prompt variant and temperature; predecessor (M\textsubscript{old}) reference cells included. 0 parse failures in all 816 new calls.}
\label{tab:e1}
\centering
\small
\begin{tabular}{llrrr}
\toprule
Judge & Cell & SD & modal share & separation \\
\midrule
M\textsubscript{new} & orig, $t{=}0.0$ & $0.31$ & $0.90$ & $0.44$ \\
M\textsubscript{new} & orig, $t{=}0.6$ & $0.31$ & $0.90$ & $0.24$ \\
M\textsubscript{new} & orig, $t{=}1.0$ & $0.35$ & $0.88$ & $0.57$ \\
M\textsubscript{new} & para, $t{=}0.0$ & $0.55$ & $0.50$ & $1.64$ \\
M\textsubscript{new} & para, $t{=}0.6$ & $0.53$ & $0.65$ & $0.92$ \\
M\textsubscript{new} & para, $t{=}1.0$ & $0.45$ & $0.73$ & $1.13$ \\
\midrule
M\textsubscript{old} & orig, $t{=}0.6$ (ref.) & $0.94$ & $0.42$ & $0.73$ \\
M\textsubscript{old} & para, $t{=}0.6$ & $0.87$ & $0.60$ & $1.17$ \\
\bottomrule
\end{tabular}
\end{table}

\subsection{Threshold-model refit on representative pools}
\label{app:refit}

\begin{table}[h]
\caption{Two-parameter model refitted on random subsets of the 220-object expansion pool (200 draws per cell), evaluated on held-out pool objects. ``Pass'' $=$ share of draws meeting the preregistered $2/3$ explained-fraction gate. Anchor diagnostics explain the Stage-1a Gemma pathology: the old judge compresses anchors into a two-point range.}
\label{tab:refit}
\centering
\small
\begin{tabular}{llrrr}
\toprule
Chain & Fit set & median expl.\ frac. & IQR & pass \\
\midrule
Qwen & 24 anchors (orig.) & $-0.15$ & --- & failed \\
Qwen & pool $k{=}24$ & $0.64$ & $[0.43, 0.85]$ & $47\%$ \\
Qwen & pool $k{=}40$ & $0.76$ & $[0.58, 0.87]$ & $63\%$ \\
Qwen & pool $k{=}80$ & $0.81$ & $[0.67, 0.90]$ & $77\%$ \\
Gemma & 24 anchors (orig.) & $-1.93$ & --- & failed \\
Gemma & pool $k{=}24$ & $0.09$ & $[-0.79, 0.49]$ & $15\%$ \\
Gemma & pool $k{=}80$ & $0.25$ & $[-0.26, 0.63]$ & $23\%$ \\
\bottomrule
\end{tabular}
\end{table}

Anchor-range diagnostics: Qwen anchors span $[3, 8]$ (SD $1.61$), matching the pool (SD $1.49$); Gemma anchors span $[6, 8]$ (SD $0.70$), leaving the slope unidentified, hence the catastrophic $\gamma = 2.5$ extrapolation. Object-level ceiling: a linear map fitted on the \emph{full} pool leaves residual SD $1.06$ vs.\ raw gap SD $1.14$ (Qwen) and $1.29$ vs.\ $1.33$ (Gemma), under $8\%$ absorbed on both chains. Interpretation in \S\ref{sec:decomposition}.

\subsection{Recognition-stratum covariate balance}
\label{app:balance}

Table~\ref{tab:balance} reports covariate balance across the title-verified recognition strata; the systematic imbalance (recognized objects are older and higher-scored) is what motivates the observational caveats attached to the $0.155$ SD excess gap in \S\ref{sec:decomposition}.

\begin{table}[h]
\caption{Covariate balance across title-verified recognition strata (standardized mean differences). Recognized objects are systematically older and higher-scored; raw stratified gaps should not be read causally (\S\ref{sec:decomposition}).}
\label{tab:balance}
\centering
\small
\begin{tabular}{lrrrr}
\toprule
& \multicolumn{2}{c}{Qwen 3.6 ($n{=}27/96$)} & \multicolumn{2}{c}{Gemma 4 ($n{=}94/219$)} \\
Covariate & rec.\ / other & SMD & rec.\ / other & SMD \\
\midrule
Year & $2023.3$ / $2024.1$ & $-0.55$ & $2023.7$ / $2024.4$ & $-0.49$ \\
Text length & $3105$ / $2894$ & $+0.32$ & $3125$ / $2870$ & $+0.34$ \\
Soundness score & $2.78$ / $2.42$ & $+0.53$ & $2.80$ / $2.33$ & $+0.69$ \\
Overall score & $5.08$ / $3.72$ & $+0.73$ & $4.89$ / $3.51$ & $+0.73$ \\
\bottomrule
\end{tabular}
\end{table}

\textbf{Placebo-clamp check (regression-to-the-mean control).} If the $0.155 \to 0.04$ collapse of the recognized-stratum excess gap were remeasurement noise or shrinkage, it should appear under \emph{any} re-scoring, including placebo packets. It does not: on the same Qwen objects, the excess gap is $0.155$ SD at L0, $0.041$ under real facts (L1), and $0.164$ under matched placebo packets (L1P): the clamp is content-linked. Gap dispersion is comparable across strata (SD $0.59$ recognized vs.\ $0.55$ unrecognized; Levene $p{=}0.29$), disfavoring shrinkage accounts. The per-object paired contrast (each object's L0$-$L1 gap change, recognized minus unrecognized strata) is $0.11$ SD $[-0.16, +0.40]$, directionally consistent but individually non-significant, which is why the main text reports neutralization as supported, not settled. The Gemma chain shows the same qualitative pattern (excess $0.93$ at L0 $\to 0.56$ under L1 vs.\ $0.82$ under L1P) at much smaller $n$ ($5$ recognized objects). Analysis: \texttt{reviewer3\_analyses.py} on frozen calls.

\subsection{Multiplicity control (Benjamini--Hochberg FDR)}
\label{app:bh}

\begin{table}[h]
\caption{BH false-discovery-rate correction ($q = 0.05$) across the eleven scalar confirmatory endpoints (BH controls FDR, not FWER). E1 (decision rule) and rolling prediction (pure estimation) carry no scalar $p$-value and are registered separately in Table~\ref{tab:fullregistry}. $p$-values marked $^{a}$ are normal approximations recovered from bootstrap CIs; others as originally reported. No null conclusion changes; the Gemma strictward direction loses significance under FDR control.}
\label{tab:bh}
\centering
\small
\begin{tabular}{lrl}
\toprule
Endpoint & $p$ & BH significant \\
\midrule
$\dver$ Qwen & $0.0004^{a}$ & yes \\
$\dver$ Mistral & $0.00004^{a}$ & yes \\
$\dver$ Gemma & $0.045^{a}$ & \textbf{no} \\
Pooled $\Delta_{\mathrm{diag}}$ (flawed) & $0.069$ & no \\
E2 content-specific effect & $0.088^{a}$ & no \\
$\Delta_{\mathrm{fact}}$ pooled & $0.23^{a}$ & no \\
Masking DiD & $0.27^{a}$ & no \\
$\rho$(sensitivity, evaporation) & $0.34$ & no \\
Winner's curse & $0.51^{a}$ & no \\
Loop $-$ generic control & $0.61^{a}$ & no \\
Mistral replication (one-sided) & $0.66$ & no \\
\bottomrule
\end{tabular}
\end{table}

\subsection{Placebo-controlled fact effect (E2)}
\label{app:e2}

L1P was extended from 24 anchors to every L1-covered object (480 new calls; Qwen chain $n{=}80$ objects, Gemma chain $n{=}48$; existing anchor L1P pooled). All endpoints per-object paired medians on the chain's pooled L0 SD scale; 20{,}000-draw bootstrap.

\begin{table}[h]
\caption{E2: content-specific vs.\ format effects of fact packets on the version gap. $D_{\mathrm{content}} = \dver(\mathrm{L1P}) - \dver(\mathrm{L1})$; format $= \dver(\mathrm{L1P}) - \dver(\mathrm{L0})$. \emph{Sign convention}: per-chain columns are raw differences in each chain's native gap sign (Qwen's gap is negative, Gemma's positive); the pooled column first multiplies each chain by the sign of its L0 gap, so that positive $=$ real facts close more of the gap than placebo. Qwen's raw $+0.091$ therefore enters the pool as $-0.091$.}
\label{tab:e2}
\centering
\footnotesize
\setlength{\tabcolsep}{4pt}
\begin{tabular}{lrrr}
\toprule
& Qwen ($n{=}80$) & Gemma ($n{=}48$) & pooled ($n{=}128$) \\
\midrule
$\dver(\mathrm{L0})$ & $-0.318$ & $+0.262$ & --- \\
$\dver(\mathrm{L1})$ & $-0.238$ & $+0.234$ & --- \\
$\dver(\mathrm{L1P})$ & $-0.147$ & $+0.151$ & --- \\
$D_{\mathrm{content}}$ & $+0.091$ $[-0.056, +0.238]$ & $-0.083$ $[-0.206, +0.041]$ & $-0.088$ $[-0.190, +0.012]$ \\
Format effect & $+0.171$ $[+0.012, +0.346]$ & $-0.110$ $[-0.255, +0.034]$ & --- \\
\bottomrule
\end{tabular}
\end{table}

Under the frozen decision rule (E2, case a): the 95\% CI of the pooled content-specific effect contains zero (TOST excludes $\geq 0.174$; $\mathrm{MDE}_{80} = 0.144$), while placebo packets alone move the gap toward zero on both chains (significantly on Qwen). $\Delta_{\mathrm{fact}}$ in the main text is therefore a \emph{policy} endpoint (the intervention does not work); the mechanism-level conclusion is that packet \emph{format}, not informational content, is the active ingredient.

\subsection{Full-pool version gaps, subgroup structure, and rank stability}
\label{app:pool}

Table~\ref{tab:pool} (main text) reports the headline pool estimates; this section adds the subgroup structure. Each pool object is scored once per judge at L0 (Mistral: median over replicates); gaps are standardized by the chain's pooled pool-L0 SD; CIs are object-clustered bootstrap (20{,}000 draws).

\textbf{By year} (Qwen / Gemma / Mistral, SD units): 2022: $-0.24$ / $+0.35$ / $-0.66$; 2023: $-0.49$ / $+0.06$ / $-0.40$; 2024: $-0.40$ / $+0.04$ / $-0.43$; 2025: $-0.23$ / $+0.13$ / $-0.59$; 2026 (post-cutoff): $-0.29$ / $+0.02$ / $-0.22$. Signs are stable within chain across all five years, including post-cutoff 2026 objects: the shift is not an artifact of any submission-year stratum.

\textbf{By reviewer overall-score tertile}: Qwen $-0.23$ / $-0.37$ / $-0.41$ (low/mid/high); Gemma $+0.80$ / $-0.20$ / $-0.43$; Mistral $-0.75$ / $-0.26$ / $-0.30$. The Gemma reversal across tertiles is the clearest single exhibit of object-non-uniformity: the same transition is sharply strict on weak objects and lenient on strong ones, so its pooled mean ($+0.11$) understates the object-level movement. Global location-scale maps cannot represent this pattern, consistent with the refit ceiling of Appendix~\ref{app:refit}.

\textbf{Anchor vs.\ pool}: signs agree on all three chains; magnitudes agree on Qwen ($-0.31$ vs.\ $-0.33$) and are attenuated on Gemma ($+0.30$ vs.\ $+0.11$, pool CI crossing zero) and Mistral ($-0.65$ vs.\ $-0.45$). We accordingly treat anchor estimates as audit statistics, not as unbiased estimates of the deployment-pool shift, and report both.

\subsection{Controller threshold provenance and leave-one-transition-out check}
\label{app:loto}

\textbf{Provenance.} The meta-policy thresholds (anchor separation $< 0.30$ for degradation; anchor boundary disagreement $> 10\%$ for calibration) were fixed by convention but formalized after stage-2 data existed; the 3/3 match in Figure~\ref{fig:controller} is therefore an in-simulation demonstration.

\textbf{Leave-one-transition-out.} For each held-out chain we sweep the disagreement threshold over $[0.02, 0.28]$ (step $0.02$), keep the values for which the meta-policy matches the post-hoc best arm (within $0.01$ F1) on \emph{both} training chains, and evaluate the median such threshold on the held-out chain. Training regions: $[0.02, 0.12]$ (Qwen held out), $[0.02, 0.12]$ (Gemma held out), $[0.02, 0.20]$ (Mistral held out); selected thresholds $0.07 / 0.07 / 0.11$; held-out match: 3 of 3 (on the Mistral fold the F1 match is trivial because scores are near-constant, but its degradation flag is correctly raised, consistent with the main-text accounting that counts only the non-degenerate chains as calibration evidence). The region for which all three chains match simultaneously is $[0.02, 0.12]$, and the preregistration-era default of $0.10$ lies inside it. The separation threshold enters only the degradation flag (Mistral's $0.243$ vs.\ others' $\geq 2.75$ leaves a wide margin) and does not affect F1. With three transitions this establishes threshold \emph{insensitivity}, not external validity; independent transitions are the real holdout, and amendments F1 and G4D (Appendix~\ref{app:api4th}) supply two, on both of which the frozen thresholds prescribe the correct action.

\textbf{Anchor-composition sensitivity.} A stratified bootstrap over the anchor set (clean and flawed anchors resampled separately, $5{,}000$ draws) and leave-one-anchor-out both leave the non-degenerate verdicts unchanged: Qwen and Gemma separations stay far above the health threshold (bootstrap flag rate $0\%$; LOO ranges $[2.66, 3.44]$ and $[2.90, 5.27]$), and the calibration trigger fires in $93\%$ and $100\%$ of draws respectively. The collapsed Mistral chain, by contrast, sits \emph{near} the threshold: separation $0.243$, bootstrap flag rate $51\%$, LOO range $[0.03, 0.61]$. The mean-separation flag is therefore not a reliable sole detector at the boundary; the score-concentration statistic ($93\%$ of pool calls on a single value; $86\%$ tied pairs) sits far from any boundary and is the robust signal; a deployed audit should raise the degradation flag on either. A related concern, whether a pathological anchor set could silently pass the health check, is bounded by construction: the health statistic uses the \emph{new} judge's clean--flawed separation, and the Gemma anchor pathology (old-judge score compression, which broke the anchor-fitted \emph{map}) does not touch it; when the map breaks, the disagreement trigger fires instead (trigger rate $100\%$ on Gemma). Analysis: \texttt{reviewer3\_analyses.py}.

\section{Intervention artifacts: audit and examples}
\label{app:artifacts}

\textbf{Audit.} All packets and certificates are model-generated (extractor/builder: \texttt{phi-4}, a third family relative to every judge chain) with automated leakage filtering (33 facts dropped for evaluative implication). Table~\ref{tab:artifactaudit} summarizes scale, an evaluative-language screen (regex over 30 evaluative terms), and a decision-relevance proxy: the share of reviewer-verified risk factors that any packet item lexically matches (content-word Jaccard $\geq 0.15$). The numbers quantify the design intent: fact packets are nearly evaluative-free and, by construction, nearly disjoint from the defect content that determines soundness scores ($\sim$5\% lexical recall); certificates carry evaluative content and cover $\sim$6$\times$ more of the risk factors. This is why the fact-channel null (\S\ref{sec:decomposition}) is scoped to non-evaluative packets: covering the \emph{decision-relevant} information is precisely what L1 was designed \emph{not} to do (that is L2's job), and an expert-written packet arm remains future work.

\begin{table}[h]
\caption{Artifact audit. ``Eval.\ terms'' $=$ share of items matching an evaluative-language screen; ``RF recall'' $=$ share of reviewer-verified risk factors lexically matched by at least one item (content-word Jaccard $\geq 0.15$; $n{=}73$ objects with risk factors).}
\label{tab:artifactaudit}
\centering
\small
\begin{tabular}{lrrrr}
\toprule
Artifact & items/object & words/item & eval.\ terms & RF recall \\
\midrule
Fact packet (L1) & $5.6$ & $25.8$ & $4.3\%$ & $5.3\%$ \\
Placebo packet (L1P) & $5.8$ & $19.8$ & $2.8\%$ & --- \\
Defect certificate (L2) & $2.7$ & $25.9$ & $15.3\%$ & $29.2\%$ \\
\bottomrule
\end{tabular}
\end{table}

\textbf{Examples} (object \texttt{EXP001}, a contextual-bandit proposal; items abridged to three per artifact).

\emph{Real facts (L1)}: ``The research proposal aims to modify reinforcement learning algorithms for continuous control to solve contextual bandit problems with continuous actions and large context spaces.'' ``The proposed RL agent is based on a modified DDPG algorithm for the one-step bandit setting.'' ``The performance of the proposed RL agent will be evaluated on four OpenML datasets: Wisconsin, CPU\_act, Zurich Delays 5\%, and Black Friday.''

\emph{Placebo facts (L1P; relevant to the area, uninformative about this proposal)}: ``Reinforcement learning (RL) is a subfield of machine learning focused on training agents to make sequences of decisions by interacting with an environment.'' ``Contextual bandits are a simplified version of RL where the agent must choose actions based on given contexts, aiming to maximize cumulative rewards.'' ``Common datasets used in RL and contextual bandit research include OpenML datasets like Wisconsin, CPU\_act, and Zurich Delays, as well as image datasets like MNIST.''

\emph{Defect certificate (L2; severity language filtered)}: ``The proposal does not address the potential loss of theoretical convergence guarantees when using nonlinear function approximators in the modified RL algorithms.'' ``The application of the proposed method to medical images is deferred to future work, despite the initial motivation stemming from personalized healthcare.''

\emph{Placebo certificate (L2P)} is a donor certificate from a different object, matched for length, style, and authority cues; the pairing map is released with the artifacts.

\section{Object flow and sample accounting}
\label{app:accounting}

Table~\ref{tab:accounting} reconciles every released judge call with its stage, object subset, and conditions; counts are computed directly from the released JSONL logs. Object sets overlap by design (anchors recur in every stage; the decomposition pool is a subset of the screening pool; the full pool contains the expansion pool), so object columns do not sum to 348. Amendment calls are stored with their host stage's logs: E2's 480 L1P calls appear under stage 0b, and E1 reuses 144 frozen stage-0a cells in addition to the 672 new stage-3 calls.

\begin{table}[h]
\caption{Sample accounting. Judges: Q $=$ Qwen pair, G $=$ Gemma pair, M $=$ Mistral pair, A $=$ GPT-5.4/5.5 pair, D $=$ DeepSeek flash/pro pair.}
\label{tab:accounting}
\centering
\scriptsize
\setlength{\tabcolsep}{4pt}
\begin{tabular}{llllr}
\toprule
Stage & Object set & Conditions (calls) & Judges & Calls \\
\midrule
0a & 48 anchors (12 clean/12 flawed/24 natural) & L0 (616), B (192), R (192) & Q, G & 1{,}000 \\
0b & 96 screening pool $+$ 48 anchors & L1 (768), L1P (768), L0 (192), R (192), B (64) & Q, G & 1{,}984 \\
1a & 80 decomposition pool $+$ 62 masked variants & L2 (878), L2P (878), L0 (372), R (124) & Q, G & 2{,}252 \\
1b & 40 revision objects $+$ depth/control variants & L0 (1{,}014), B (400) & Q, G & 1{,}414 \\
2 & 300 full pool ($\supset$ 220 expansion pool) & L0 (1{,}672), R (1{,}480), L2/L2P (396), B (96) & Q, G, M & 3{,}644 \\
3 & 48 anchors (E1 perturbation matrix) & L0 (672; prompt $\times$ temperature cells) & M & 672 \\
F1 & 48 anchors & L0 (288; 3 repeats) & A & 288 \\
F2 & 220 expansion pool & L0 (660; old $\times 1$, new $\times 2$) & A & 660 \\
F3 & 300 full pool (per-judge pool subsets) & L0 (1{,}384; independent second pass) & Q, G, M & 1{,}384 \\
G4D & 48 anchors & L0 (288; 3 repeats) & D & 288 \\
G6 & 48 code-review objects (24 clean/24 flawed) & L0-code (864; 3 repeats) & Q, G, M & 864 \\
G7 & 24 high-residual pool objects $\times$ 3 arms & L0 (432; 3 repeats) & G & 432 \\
\midrule
Total & & & & 14{,}882 \\
\bottomrule
\end{tabular}
\end{table}

\section{Amendments log}
\label{app:amendments}

Amendments extend scope or add replications. They do not change the definition, sign, or significance threshold of a confirmatory endpoint after that endpoint has been unblinded. Infrastructure-only patches that never contacted a judge endpoint are marked as such. Each amendment below was documented before unblinding the affected endpoint.

\begin{itemize}
\item \textbf{A1} (Stage 0a): added the Gemma transition and the recognition probe as first-class conditions; raised \texttt{max\_new\_tokens} to 6144 after 18/202 truncation failures in J\_old. Frozen before Gemma judge output.
\item \textbf{B1} (Stage 1a): relaxed masked-title acceptance (SequenceMatcher $< 0.55$, no 5-gram overlap, best-of-4 fallback logged as \texttt{mask\_fallback}) after a domain-term deadlock; frozen before any masked-object judge output.
\item \textbf{B2} (Stage 1a): raised TP1 GPU-memory buffer (infrastructure only; no endpoint contact).
\item \textbf{C1} (Stage 1b): collected missing Gemma depth-0 baselines for 19 selection objects (114 calls); endpoint definitions unchanged; estimation set restored from 21 to the preregistered 40 objects.
\item \textbf{D1} (Stage 2): preregistered confirmatory replication of the exploratory paired certificate contrast on the Mistral chain (sign, scale, one-sided test, and degenerate-instrument manipulation check frozen before any output). Result: null ($-0.082$, $p{=}0.66$); reported in \S\ref{sec:decomposition}.
\item \textbf{E1} (Stage 3): perturbation-matrix attribution of the Mistral~3.2 score collapse: 48 anchors, L0, prompt $\{$original, paraphrase$\}$ $\times$ temperature $\{0.0, 0.6, 1.0\}$ (816 calls). Endpoints (between-object SD per cell; clean--flawed separation) and the decision rule (all cells $< 0.5\times$ predecessor SD $\Rightarrow$ robust degeneration; any cell $\geq 0.8\times$ $\Rightarrow$ configuration-specific) frozen before any Stage-3 output. Results: Appendix~\ref{app:robustness}.
\item \textbf{E2} (Stage 3): placebo-controlled fact-channel test: L1P extended to every L1-covered object (480 new calls; existing 24-anchor L1P data pooled). Primary endpoint $\dver(\mathrm{L1P}) - \dver(\mathrm{L1})$ per chain, sign-aligned pooled, cluster bootstrap; decision rule frozen before any Stage-3 output. Results: Appendix~\ref{app:robustness}.
\item \textbf{F1} (post-stage-3): descriptive fourth transition on pinned frontier API snapshots (\texttt{gpt-5.4-2026-03-05} $\to$ \texttt{gpt-5.5-2026-04-23}): the 48 frozen anchors, stage-0a L0 rubric verbatim, three repeats (288 calls). Endpoints (anchor $\dver$, clean--flawed separation per version, diagnose-then-act evaluation at the previously frozen thresholds, rank stability) and the pass criterion frozen in \texttt{amendment\_F1.json} before any API call; sampler-parameter fallback rule (drop identically for both judges if rejected) also frozen. Results: Appendix~\ref{app:api4th}.
\item \textbf{F2} (post-stage-3): pool-scale rescore of the F1 pair: all 220 expansion-pool objects, L0 verbatim, old snapshot once and new snapshot twice (660 calls). Endpoints (pool $\dver$; isotonic calibration payoff with a frozen vindication rule; pool-level same-version baseline; accept-boundary disagreement) frozen in \texttt{amendment\_F2.json} before any call; no main-text claim retuned. Results: Appendix~\ref{app:api4th}.
\item \textbf{F3} (post-stage-3): pool-level same-version repeated-call baseline for all six open-weight judges: one independent second L0 pass over every pool object per judge (1{,}384 calls), with prompts, decoding, and the per-call seed rule identical to the frozen stage-2 pool protocol. Endpoints (per-judge test--retest $\tau$ and top-30\% flip; per-transition excess flip) frozen in \texttt{pool\_baseline\_design.json} before any call. Results: Appendix~\ref{app:api4th}.
\item \textbf{G1--G3} (post-draft structural audits; zero new judge calls): G1 regresses per-object version gaps on object features, with an F3-based split-half reliability ceiling; G2 reads all cross-family judge pairs on the shared pool as replacements, against same-version baselines; G3 stress-tests the anchor audit itself (anchor-count curves, sampling policy, threshold sweeps). All three are exploratory analyses of frozen calls. Results: Appendix~\ref{app:structural}.
\item \textbf{G4D} (post-draft): blind test of the frozen audit thresholds on a fifth judge pair, from a second API provider: DeepSeek \texttt{v4-flash} $\to$ \texttt{v4-pro}, a same-provider tier replacement rather than a dated-snapshot upgrade. The 48 frozen anchors, L0 verbatim, three repeats per judge (288 calls). Endpoints and the diagnose-then-act evaluation frozen in \texttt{amendment\_G4D.json} before any call; one logged infrastructure patch (token-budget escalation retry for the reasoning-mode flash tier, which otherwise exhausts its budget before emitting the JSON) with no endpoint change. Results: Appendix~\ref{app:api4th}.
\item \textbf{G5} (post-draft, exploratory): structured coding of the paired old/new L0 rationales on the full pool, three chains, by a frozen third-party coder (DeepSeek \texttt{v4-flash}; fields: criteria cited, defects cited, stance, same-substance judgment), separating \emph{mapping-shift} (same substance, different score) from \emph{content-shift} (changed cited defects). No new judge calls; coder calls only. Results: Appendix~\ref{app:structural}.
\item \textbf{G6} (post-draft): second-domain replication of the L0 trio on code review: 48 HumanEvalPack-derived objects (24 clean, 24 with injected bugs), all six frozen judges, three repeats (864 calls), stage-0a decoding. Endpoints and a no-pass/fail scope declaration frozen in \texttt{amendment\_G6.json} before any call. Results: Appendix~\ref{app:domain2}.
\item \textbf{G7} (post-draft): targeted counterfactual perturbation of the 24 highest-residual Gemma pool objects: identity/compress/neutral-style arms, both Gemma judges, three repeats (432 calls), stage-2 decoding; selection rule, edit prompts, endpoints, and a validity gate frozen in \texttt{amendment\_G7.json} before any edit or call. One logged patch (\texttt{g7\_patch\_log.json}): the frozen editor's API balance was exhausted mid-generation, so all edits were regenerated with a single local neutral editor (\texttt{phi-4}) before any judge call. Results: Appendix~\ref{app:structural}.
\end{itemize}

\section{API judge pairs and same-version baselines (amendments F1--F3, G4D)}
\label{app:api4th}

This appendix takes the audit protocol out of sample, in four steps: a blind anchor audit of a pinned GPT snapshot pair (F1), a pool-scale rescore of the same pair that tests the prescribed calibration where it matters (F2), same-version repeated-call baselines that separate version change from sampling noise for every instrument in the paper (F3), and a second blind pair from a second provider (G4D). Every design was frozen before its first call.

\textbf{Design (frozen before any call).} Judges: OpenAI dated snapshots \texttt{gpt-5.4-2026-03-05} (old) and \texttt{gpt-5.5-2026-04-23} (new), the most recent adjacent dated pair at run time, mirroring the ``upgrade to the newest version'' decision the paper studies. Objects: the 48 stage-0a frozen anchors (12 clean / 12 flawed / 24 natural), content-hash verified. Condition: L0 only, the stage-0a rubric prompt verbatim. Three repeats per object per judge (288 calls); the same JSON-final-line parsing rule as all other chains. The endpoints, the diagnose-then-act thresholds to be applied (separation $<0.30$; boundary disagreement $>0.10$), the pass criterion, and a sampler-parameter fallback rule were frozen in \texttt{amendment\_F1.json} before the first API call. The endpoint rejected \texttt{temperature} and \texttt{top\_p} for these snapshots; per the frozen rule both were dropped identically for both judges (provider-default decoding).

\textbf{Results.} Zero parse failures in 288 calls. Anchor $\dver = +0.452$ SD, object-clustered bootstrap 95\% CI $[+0.319, +0.572]$, with pooled L0 SD $1.57$ and means $5.96 \to 5.25$. The newer snapshot is \emph{stricter}, opposite in sign to the two robust open-weight shifts and adding an out-of-family instance of sign unpredictability. Clean--flawed separation moves from $3.52$ to $3.11$, far above the $0.30$ degradation threshold, with no Mistral-style collapse. Rank stability is Kendall $\tau = 0.80$, with $17.6\%$ of anchor pairs exactly tied under the new judge. Boundary ($\geq 6.0$) decision disagreement is $12.5\%$.

\textbf{Diagnose-then-act evaluation.} At the frozen thresholds the heuristic withholds the degradation flag, since separation $3.11 \geq 0.30$, and triggers calibration, since disagreement $12.5\% > 10\%$. Post-hoc inspection of the same data agrees on both counts: the instrument is healthy, and a $0.45$ SD location shift at an intact ranking is exactly the regime in which anchor calibration pays (\S\ref{sec:deployment}). This is the first evaluation of the thresholds on a transition that contributed nothing to their selection, and the first evidence that the audit protocol transfers to API-served judges.

\textbf{Scope.} Pinned snapshots are not weight-frozen. The provider can silently alter serving, so this transition cannot join the confirmatory family and its estimates are descriptive. Decoding differs from the open-weight chains, using provider defaults rather than seeded sampling, and reasoning-mode inference may differ in kind. What the amendment establishes is protocol transfer and threshold generalization to a fourth, out-of-family, API-served pair, not a fourth confirmatory replication. Whether the prescribed calibration realizes a benefit on this pair requires pool-scale rescoring; amendment F2 supplies it. All raw responses, the frozen amendment files, and the analysis scripts are released with the artifacts.

\textbf{Pool-scale rescore (amendment F2; design frozen before any call).} Both snapshots scored all 220 expansion-pool objects under the verbatim L0 prompt, the new snapshot twice (660 calls; the F1 identical-drop sampler rule applied from the start; zero parse failures). Pool $\dver = +0.735$ SD, object bootstrap 95\% CI $[+0.641, +0.824]$ ($n{=}220$), larger than the anchor estimate of $0.45$: the anchor audit understated this shift rather than overstating it. Accept-boundary ($\geq 6.0$) disagreement is $32.7\%$ at pool scale versus $12.5\%$ on anchors, so the calibration trigger fires more strongly out of sample and confirms the F1 diagnosis. The payoff is pipeline-dependent. The frozen selection-fidelity endpoint, top-30\% F1 of calibrated-old versus naive-old selection against the new-judge policy, returns exactly equal values ($0.727$). An isotonic map with the frozen tie-break is monotone and cannot alter a top-$k$ selection, so the frozen vindication rule returns null and we report it as such. The measurable payoff is at the score threshold: anchor-fitted isotonic calibration cuts accept-boundary disagreement from $32.7\%$ to $19.1\%$, a post-hoc descriptive contrast computed after the frozen endpoints were evaluated. Same-version baseline on the API pair, from two independent new-snapshot passes: test--retest $\tau = 0.824$, top-30\% flip $18.2\%$, against cross-version $\tau = 0.661$, flip $27.3\%$. The version-attributable excess is about nine points, replicating the open-weight pattern below on a frontier API chain.

\textbf{Same-version repeated-call baselines (post hoc).} The main-text rank-instability statistics (Table~\ref{tab:pool}) are computed from one call per judge per object, so they conflate version change with sampling noise. Using the three frozen anchor L0 repeats per judge, mean pairwise single-call test--retest is: Qwen $\tau_{\mathrm{rep}} = 0.78$ (old) / $0.72$ (new); Gemma $0.62$ / $0.83$; Mistral $0.37$ / $0.55$; API pair $0.89$ / $0.92$. Like-for-like cross-version comparisons, single call versus single call on the same anchors, give $\tau = 0.65$ (Qwen), $0.47$ (Gemma), $0.23$ (Mistral), and $0.85$ (API), below the corresponding same-version baselines on every chain, with the smallest margin on Qwen. Amendment F3 upgrades this anchor-level bound to a direct pool-level decomposition. The residual lesson stands: single-call scoring is a noisy instrument, and an audit should use repeated calls, as our anchor protocol does.

\textbf{Pool-level same-version baselines for the open-weight chains (amendment F3; design frozen before any call).} One independent second L0 pass over every pool object per judge (1{,}384 calls; prompts, decoding, and the per-call seed rule identical to the frozen stage-2 pool protocol; $\leq 2$ parse failures per judge). Per-judge pool test--retest $\tau$ (top-30\% flip): Qwen $0.70$ ($24\%$) old / $0.68$ ($29\%$) new; Gemma $0.70$ ($24\%$) / $0.82$ ($11\%$); Mistral $0.34$ ($38\%$) / $0.76$ ($2.7\%$). Cross-version $\tau$ falls below \emph{both} endpoints' own baselines on all three chains ($0.61$ vs.\ $\geq 0.68$; $0.51$ vs.\ $\geq 0.70$; $0.19$ vs.\ $\geq 0.34$), and the excess top-30\% flip beyond each pair's same-version mean is $+11.9$ (Qwen), $+34.8$ (Gemma), $+13.8$ (Mistral) points. Two artifacts read correctly. Pre-collapse \texttt{M\_old} is itself a noisy instrument, with test--retest $\tau = 0.34$, so a substantial share of the Mistral cross-version flip is sampling noise rather than version change. Collapsed \texttt{M\_new}'s $2.7\%$ flip reflects the deterministic tie-break under $86\%$ ties, not discrimination. This replaces the earlier anchor-level upper-bound reading: the main-text pool flip percentages now decompose into a measured same-version noise floor plus a version-attributable excess that is positive on every chain.

\textbf{Cross-fitted calibration check (post hoc, underpowered).} A top-$k$ payoff check on the anchors alone would be vacuous, because isotonic maps are monotone and cannot change a rank-based selection of the same score set; the simulated F1 gains in \S\ref{sec:deployment} arise from mixing calibrated old scores with rescored objects. The deployment decision that monotone recalibration \emph{does} move is the fixed accept boundary ($\geq 6.0$), the same boundary the calibration trigger monitors. Cross-fitting on the F1 anchors (200 random 24/24 splits; isotonic fit on one half, evaluated on the other): held-out accept-decision agreement with the new judge rises from $0.879$ (uncalibrated) to $0.909$ (calibrated), improving or tying in $89\%$ of splits ($\Delta$ agreement $10$th--$90$th percentile $[-0.04, +0.13]$). Directionally positive, but with 24-object evaluation halves this is a consistency check, not a payoff measurement; the F2 pool-scale rescore above confirms the direction at $n{=}220$ ($32.7\% \to 19.1\%$).

\textbf{A fifth pair from a second provider (amendment G4D; design frozen before any call).} DeepSeek \texttt{v4-flash} $\to$ \texttt{v4-pro} is a same-provider \emph{tier replacement}, the ``swap to the better model'' decision rather than a dated-snapshot upgrade. The protocol matches F1: the 48 frozen anchors, L0 verbatim, three repeats per judge, 288 calls, with zero parse failures after a logged token-budget escalation retry for the reasoning-mode flash tier. Anchor replacement gap $-0.42$ SD, object-clustered bootstrap 95\% CI $[-0.60, -0.25]$, means $4.75 \to 5.48$. The newer, more capable tier is \emph{more lenient}, the opposite sign to the GPT pair on the same objects and prompt, and a fifth instance of sign unpredictability. Clean--flawed separation improves ($1.78 \to 2.12$); Kendall $\tau = 0.73$ with $16\%$ tied pairs; boundary ($\geq 6.0$) disagreement is $16.7\%$. At the frozen thresholds the heuristic withholds the degradation flag and triggers calibration; post-hoc inspection agrees on both counts. This is the second blind evaluation of the frozen thresholds on a pair that contributed nothing to their selection, and the first on a non-OpenAI provider. The same scope caveats as F1 apply: served weights are not frozen, decoding is provider-default, and the estimates are descriptive.

\section{Post-hoc structural audits (amendments G1--G3, G5, G7)}
\label{app:structural}

Five exploratory analyses, registered post hoc as amendments. G1--G3 and G5 reuse only frozen calls; G7 adds 432 new calls under the frozen stage-2 protocol to test two G1 channels causally. None of the endpoints below is confirmatory.

\textbf{Residual structure (amendment G1).} The main text attributes the post-clamp residual to an object-dependent response-function change; this audit asks whether that residual has observable structure. Using the F3 second pool pass as an independent replicate, the per-object version gap's split-half reliability is $r = 0.25$ (Qwen), $0.81$ (Gemma), and $0.33$ (Mistral, where old-judge noise dominates). Regressing the single-pass gap on object features (year, subfield, reviewer scores, risk-factor count, log length, verified recognition; OLS with HC3 errors and a 2{,}000-draw permutation test on $R^2$): on Gemma, where the gap is reliable, features explain $R^2 = 0.46$ (permutation $p < 5\times10^{-4}$), about $56\%$ of the reliability ceiling: the newer judge scores longer ($t = -3.7$), higher-rated-rigor ($t = -2.6$), and previously recognized ($t = -3.2$) proposals relatively more favorably, directionally consistent with its sharpened discrimination and its higher external validity (below). On Qwen the gap itself is only weakly reliable and features do not predict it ($R^2 = 0.13$, permutation $p = 0.13$). On Mistral the regression describes the \emph{old} instrument, since the new one is degenerate, and is dominated by length ($R^2 = 0.27$, about 80\% of ceiling). The residual is therefore not unstructured noise: where it is measurable it is substantially attribute-driven. This sharpens the response-function interpretation without claiming mechanism identification.

\textbf{Cross-family replacement (amendment G2).} All six frozen judges scored the same 220-object pool at L0, so every cross-family pair can be read descriptively as a judge replacement. Within-family upgrades (the paper's three transitions) give $\tau = 0.19$--$0.61$ and top-30\% flips of $34$--$52\%$ (excess $+12$ to $+35$ points over the F3 same-version noise floor). The twelve cross-family pairs give location gaps up to $1.6$ pooled SD, $\tau = 0.18$--$0.56$ (median $0.30$), flips of $40$--$77\%$ (excess $+20$ to $+64$), and accept-boundary disagreement up to $59\%$. The most practice-like swap, a current-generation cross-provider replacement (J\_current $\to$ G\_new), moves the location by $0.82$ SD and flips $40\%$ of the top-30\% set. Cross-family replacement is at least as disruptive as within-family upgrade on every metric, and usually more. One orthogonal observation: G\_new tracks reviewer soundness best of all six judges (Spearman $\rho = 0.59$ vs.\ $0.29$--$0.35$), so a replacement can simultaneously be the best available instrument \emph{and} a rank-order break; the audit question and the quality question are separate.

\textbf{Audit protocol stress tests (amendment G3).} Three failure-boundary questions, answered on the four anchor-scored transitions (three open-weight, one API), with the frozen diagnosis (degradation flag, calibration trigger) as ground truth. \emph{Size}: under stratified subsampling, the full-set diagnosis is reproduced at $\geq 94\%$ from $n = 16$--$24$ anchors on the two healthy open-weight chains. The collapsed Mistral chain needs $n \geq 40$ under the separation flag alone, because the flag divides by within-group SD, which a degenerate judge makes vanishingly small and unstable; pairing it with the degeneracy statistic (modal-score share $> 0.5$), as the amended audit already recommends, reaches $84\%$ agreement at $n = 12$ and $100\%$ at $n = 32$. \emph{Sampling policy}: unstratified anchor selection leaves the health check undefined in $62\%$ of draws at $n = 8$ (no clean--flawed pairs sampled); stratification is load-bearing, not cosmetic. \emph{Thresholds}: the degradation flag is stable across the full sweep ($0.10$--$0.60$) on every chain; the calibration trigger is threshold-sensitive only where measured disagreement sits near the $10\%$ boundary (the API pair, at $12.5\%$: anchor-level subsample agreement plateaus near $70\%$, and only the F2 pool-scale check resolves the verdict). In practice, about 24 stratified anchors suffice for healthy instruments; degenerate instruments are caught earlier by degeneracy statistics than by separation; and near-threshold trigger verdicts should be escalated to a pool-scale check rather than read off the anchor audit.

\textbf{Rationale coding: mapping-shift vs.\ content-shift (amendment G5).} A location gap could arise from the same assessment mapped to a different number (mapping-shift) or from a substantively different assessment (content-shift); scores alone cannot tell them apart. For every pool object we coded the paired old/new L0 rationales with a frozen third-party coder (DeepSeek \texttt{v4-flash}; fields: criteria cited, defects cited, overall stance, and a same-substance judgment; $188$--$217$ coded pairs per chain). Coding validity: the coded stance change tracks the actual score change at Spearman $\rho = 0.62$--$0.69$ on all three chains. Among objects whose score moved by $\geq 1$ point, the pure mapping-shift share is $46\%$ (Qwen), $19\%$ (Gemma), and $16\%$ (Mistral): on two of three chains, most score movement co-occurs with \emph{changed cited defects}, not a re-scored identical assessment. Directions match the chain narratives: Gemma's new judge cites $0.94$ fewer defects per object (consistent with its lenient-ward shift and sharpened discrimination), while Qwen's new judge cites $0.77$ \emph{more} defects yet lands more positive stances: severity language and score mapping move independently. Caveats: a single LLM coder without human calibration; construct validity rests on the stance--score correlation above; exploratory, not preregistered.

\textbf{Targeted counterfactual perturbation (amendment G7; design and endpoints frozen before any edit or call).} G1 identifies length and stylistic features as the strongest observable correlates of the Gemma per-object gap; this amendment asks whether two of them \emph{mediate} it. The 24 pool objects with the largest absolute gap deviation were re-scored by both Gemma judges (three repeats, stage-2 decoding) under three arms: the original text (identity), a neutral-editor compression to $\sim$60\% length preserving every technical claim and number (median realized ratio $0.69$; median number-preservation $1.00$), and a sentence-level rewrite removing hedging and rhetorical emphasis at constant length ($0.98$). The frozen validity gate passes emphatically: the identity-arm gap reproduces the original pool gap at $r = 0.93$: the per-object gap is a stable property of the object, not session noise. Both causal contrasts are null: compression changes the absolute gap by $+0.02$ SD $[-0.15, +0.21]$ and neutralization by $+0.04$ $[-0.06, +0.15]$; movement toward the global mean gap is likewise null. The G1 features therefore predict \emph{which} objects move, but the two manipulable surface attributes are not the mediators; effects beyond about $0.2$ SD are excluded. The response-function change binds to content properties that survive compression and de-hedging. One logged patch: the frozen editor exhausted its API balance mid-generation, so all edits were regenerated with a single local neutral editor, \texttt{phi-4}, the stage-0b fact extractor; no judge call preceded the substitution.

\section{A second domain: code review (amendment G6)}
\label{app:domain2}

\textbf{Design (frozen before any call).} Do the instrument-level phenomena of location shift, discrimination change, and rank instability appear outside proposal-style text? We built 48 Python code-review objects from HumanEvalPack \citep{muennighoff2024octopack}: 24 canonical solutions (clean) and 24 with an injected, documented bug (flawed). All six frozen open-weight judges scored them under an L0-style code-review rubric with the same JSON-final-line protocol, three repeats per object per judge (864 calls), stage-0a decoding and per-call seeds. The endpoints (the per-chain trio) and a no-pass/fail scope declaration were frozen in \texttt{amendment\_G6.json} before any call.

\textbf{Results.} On Qwen, $\dver = -0.08$ SD $[-0.17, +0.01]$: no detectable location shift on code, against a robust lenient-ward shift on proposals; separation stays high on both versions ($4.9 \to 4.6$); $\tau = 0.88$. On Gemma, $\dver = +0.19$ $[-0.04, +0.42]$, with a CI crossing zero and a strictward point estimate against the lenient-ward proposal shift, but the discrimination sharpening replicates: separation $1.61 \to 4.64$; $\tau = 0.67$. On Mistral, $\dver = -0.79$ $[-0.95, -0.64]$, a large lenient-ward shift with means $5.04 \to 6.83$, and \emph{no collapse}: separation $1.66 \to 1.50$, modal-score share $31\%$, against $93\%$ on the proposal pool. Caveats include that old Qwen truncated 29/144 calls at the 4{,}096-token budget, so per-object medians use the surviving repeats and 46/48 objects are retained; the other five judges had at most 7 failures. No same-version repeated-call baseline was run on this domain, and a single rubric was used.

The phenomena are therefore not proposal-specific, but their sign and magnitude are domain-specific. The weight pair that degenerates on proposal scoring is a functioning, if strongly shifted, instrument on code review; the chain with the most robust proposal shift is the flattest here even as its discrimination sharpening carries over. A version transition cannot be characterized once and applied everywhere. An anchor audit is informative about the domain its anchors are drawn from, and should be re-run per deployment domain.

\section{The response-function model: proposition, proof, and scope}
\label{app:theory}

This appendix states and proves the claim used as an interpretive device in Section~\ref{sec:idea} and Section~\ref{sec:deployment}: within the model, a global monotone recalibration of old scores into new scores exists when the criterion weights are collinear. The main text uses only the observable signatures. We do not claim that the weights themselves are identified.

\textbf{Setup.} Each judge version $v$ scores an object $x$ as
\[ s_v(x) = g_v\!\left(w_v^{\top} z(x) + b_v\right), \]
where $z(x) \in \mathbb{R}^K$ collects latent criterion-level assessments of $x$, $w_v \in \mathbb{R}^K \setminus \{0\}$ weights them, $b_v \in \mathbb{R}$, and $g_v$ is strictly increasing. Write $o$ and $n$ for the old and new versions and $Z = \{z(x) : x \in \mathcal{X}\}$ for the realized assessment set. We call $Z$ \emph{rich} if the difference set $Z - Z$ meets every nonempty open cone in $\mathbb{R}^K$; it suffices, for instance, that $Z$ contain an open ball.

\begin{proposition}[global recalibration requires preserved geometry]
\label{prop:collinear}
Assume $Z$ is rich.
(i)~If $w_n = c\,w_o$ for some $c > 0$, then $s_n = h \circ s_o$ on all of $\mathcal{X}$ for the strictly increasing map $h(s) = g_n\!\left(c\left(g_o^{-1}(s) - b_o\right) + b_n\right)$.
(ii)~If $w_n$ is not a scalar multiple of $w_o$, then there exist pairs whose order the two versions agree on and pairs whose order they reverse; consequently no monotone $h$, increasing or decreasing, satisfies $s_n = h \circ s_o$ on $\mathcal{X}$, and every monotone recalibration of old scores mis-ranks some pair relative to the new judge.
(iii)~The remaining case, $w_n = c\,w_o$ with $c < 0$, admits the strictly decreasing $h$ of the same form: the versions agree as instruments only up to a global inversion of the scale.
\end{proposition}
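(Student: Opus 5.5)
Parts (i) and (iii) follow from a direct computation on latent scores. Part (ii) reduces to a linear-algebra fact about two non-collinear functionals, fed through the richness assumption.

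\textbf{Parts (i) and (iii).} Write $u_v(x) = w_v^{\top} z(x) + b_v$. Since $g_o$ is strictly increasing, it is injective, so $g_o^{-1}$ is well defined and strictly increasing on the range $g_o(\mathbb{R})$; that range is the only domain on which $h$ is needed. If $w_n = c\,w_o$, then
\[ u_n(x) = c\bigl(u_o(x) - b_o\bigr) + b_n, \qquad u_o(x) = g_o^{-1}\bigl(s_o(x)\bigr), \]
and applying $g_n$ to the first identity gives $s_n = h \circ s_o$ with the stated $h$. The map $h$ is a composition of the increasing $g_o^{-1}$, the affine map $t \mapsto c(t - b_o) + b_n$, and the increasing $g_n$. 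It is therefore strictly increasing when $c > 0$, which proves (i), and strictly decreasing when $c < 0$, which proves (iii).

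\textbf{Part (ii), the key step.} Both $g_v$ are strictly increasing, so for objects $x, y$,
\[ \operatorname{sign}\bigl(s_v(x) - s_v(y)\bigr) = \operatorname{sign}\bigl(w_v^{\top} d\bigr), \qquad d = z(x) - z(y) \in Z - Z. \]
The order questions are therefore about the signs of two linear functionals on difference vectors. Assume $w_n$ is not a scalar multiple of $w_o$. Both vectors are nonzero, so they are linearly independent, and the map $d \mapsto (w_o^{\top} d,\, w_n^{\top} d)$ from $\mathbb{R}^K$ to $\mathbb{R}^2$ is surjective. The preimages of the open quadrants $\{(+,+)\}$ and $\{(+,-)\}$ are then nonempty open cones in $\mathbb{R}^K$. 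By richness, $Z - Z$ meets each of them, which yields:
\begin{itemize}
\item a pair $(x, y)$ with $s_o(x) > s_o(y)$ and $s_n(x) > s_n(y)$, on which the versions agree;
\item a pair $(x', y')$ with $s_o(x') > s_o(y')$ and $s_n(x') < s_n(y')$, on which they reverse.
\end{itemize}

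\textbf{Consequences for monotone maps.} Suppose $s_n = h \circ s_o$. If $h$ were non-decreasing, the reversed pair would give $s_n(x') \geq s_n(y')$, a contradiction. If $h$ were non-increasing, the agreeing pair would give $s_n(x) \leq s_n(y)$, again a contradiction. So no monotone $h$ represents $s_n$. The final clause says more: every monotone recalibration mis-ranks some pair relative to the new judge. An increasing $h$ fails on the reversed pair, since it either preserves the old strict order or ties it, while the new judge orders the pair strictly the other way. A decreasing $h$ fails symmetrically on the agreeing pair. Here a tie counts as a mis-ranking, because the new judge orders the pair strictly.

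\textbf{Expected obstacle.} The only delicate point is the step from independence to open cones met by $Z - Z$. It needs the preimage of an open quadrant under a surjective linear map to be nonempty and open. It is open by continuity and nonempty by surjectivity, and it is a cone because the quadrant is invariant under positive scaling. I also need to check that the sufficient condition cited in the definition of richness works: if $Z$ contains an open ball $B(z_0, r)$, then $Z - Z \supseteq B(0, r)$, and every nonempty open cone contains points of arbitrarily small norm. Beyond these checks I expect only bookkeeping: handling the range of $g_o$, and treating ties under a weakly monotone $h$.
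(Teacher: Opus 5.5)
Your proof is correct and follows the paper's argument. Parts (i) and (iii) are the same direct composition through $g_o^{-1}$. For (ii), richness supplies one difference vector in an agreeing cone and one in a reversing cone, and then an increasing $h$ fails on the reversed pair while a decreasing $h$ fails on the agreeing pair. The only difference is cosmetic: you get nonemptiness of the cones from surjectivity of $d \mapsto (w_o^{\top} d,\, w_n^{\top} d)$, whereas the paper builds them explicitly as $u_0 \pm t\,u_{\perp}$ with $w_o^{\top} u_{\perp} = 0 < w_n^{\top} u_{\perp}$.
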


\begin{proof}
(i)~and~(iii) are the displayed composition: $g_o^{-1}(s_o(x)) - b_o = w_o^{\top} z(x)$, so $h(s_o(x)) = g_n(c\,w_o^{\top} z(x) + b_n) = s_n(x)$; $h$ is strictly increasing for $c > 0$ and strictly decreasing for $c < 0$.
(ii)~Suppose $w_n$ is not a scalar multiple of $w_o$. Both cones $C_{\pm} = \{u : w_o^{\top} u > 0,\ \pm\, w_n^{\top} u > 0\}$ are nonempty and open: since $w_n \not\parallel w_o$ there is a $u_{\perp}$ with $w_o^{\top} u_{\perp} = 0$ and $w_n^{\top} u_{\perp} > 0$, and for any $u_0$ with $w_o^{\top} u_0 > 0$ the vector $u_0 + t\,u_{\perp}$ lies in $C_{+}$ and $u_0 - t\,u_{\perp}$ in $C_{-}$ for $t$ large. Richness supplies $z_i - z_j \in C_{+}$ and $z_k - z_l \in C_{-}$. By strict monotonicity of $g_o, g_n$, the first pair satisfies $s_o(x_i) > s_o(x_j)$ and $s_n(x_i) > s_n(x_j)$ (order preserved), the second $s_o(x_k) > s_o(x_l)$ and $s_n(x_k) < s_n(x_l)$ (order reversed). An increasing $h$ fails on the second pair, a decreasing $h$ on the first, and any monotone recalibration mis-ranks whichever pair contradicts its direction.
\end{proof}

\begin{corollary}[what a global map can and cannot absorb]
\label{cor:gap}
If $g_o = g_n = \mathrm{id}$, the version gap is affine in the assessments: $\Delta(x) = s_o(x) - s_n(x) = (w_o - w_n)^{\top} z(x) + (b_o - b_n)$. It is constant across objects iff $w_o = w_n$. An affine correction of old scores removes the constant part and the component of $w_o - w_n$ along $w_o$; any orthogonal component leaves object-level residual dispersion that no function of $s_o$ can absorb, because objects with equal $s_o$ carry different $\Delta$.
\end{corollary}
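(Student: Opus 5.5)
The plan is to take $g_o = g_n = \mathrm{id}$ in the setup of Proposition~\ref{prop:collinear} and prove the corollary by direct linear algebra, in three steps. The first step is the affine formula, which is a one-line substitution: $s_v(x) = w_v^{\top} z(x) + b_v$, so $\Delta(x) = (w_o - w_n)^{\top} z(x) + (b_o - b_n)$.

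The second step is the equivalence ``constant iff $w_o = w_n$''. The direction $w_o = w_n \Rightarrow$ constant is immediate. For the converse I will use richness of $Z$. If $\Delta$ is constant on $\mathcal{X}$, then $(w_o - w_n)^{\top}(z_i - z_j) = 0$ for every pair. Since $Z - Z$ meets every nonempty open cone, it meets both open half-spaces $\{u : \pm (w_o - w_n)^{\top} u > 0\}$ whenever $w_o \neq w_n$. That is a contradiction, so $w_o = w_n$.

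The third step is the absorption claim. I will decompose $w_n - w_o = \alpha\, w_o + r$ with $r^{\top} w_o = 0$, which is unique since $w_o \neq 0$. An affine correction $\hat s_n = \gamma s_o + \delta$ leaves the residual
\[ s_n(x) - \gamma s_o(x) - \delta = \bigl((1+\alpha-\gamma) w_o + r\bigr)^{\top} z(x) + \bigl(b_n - \gamma b_o - \delta\bigr). \]
Choosing $\gamma = 1+\alpha$ and $\delta = b_n - \gamma b_o$ kills the constant part and the $w_o$-component, leaving exactly $r^{\top} z(x)$. That is the sense in which the affine map ``removes the constant part and the component along $w_o$''. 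It remains to show that no function $\phi$ of $s_o$ whatsoever can absorb $r^{\top} z$ when $r \neq 0$. It suffices to exhibit two objects with $s_o(x) = s_o(x')$ but $\Delta(x) \neq \Delta(x')$. Any $\phi$ gives $\phi(s_o(x)) = \phi(s_o(x'))$ for such a pair, so it must err on at least one of the two.

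I expect the main obstacle to be producing such an exact tie in old scores. The cone form of richness only yields differences $z_i - z_j$ lying in open cones, and the hyperplane $\{u : w_o^{\top} u = 0\}$ contains no open cone. I will therefore invoke the sufficient condition stated with the definition, namely that $Z$ contains an open ball $B$. Take $z \in B$ and $z' = z + \varepsilon r \in B$ for small $\varepsilon > 0$, realized by objects $x, x'$. Then $w_o^{\top}(z' - z) = 0$, so $s_o(x) = s_o(x')$, while $\Delta(x) - \Delta(x') = (w_n - w_o)^{\top}(z - z') = -\varepsilon \|r\|^2 \neq 0$; note $\Delta(x)-\Delta(x') = -(w_n-w_o)^\top(z-z')$ sign aside, the difference is nonzero either way. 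Under bare cone-richness I would instead state an approximate version. Using cones around $\pm r$ intersected with a thin slab $|w_o^{\top} u| < \eta$, one obtains pairs whose old scores differ by less than $\eta$ while their gaps differ by a fixed amount. This rules out any continuous (indeed uniformly continuous) $\phi$, and it is what ``residual dispersion'' means operationally. I would remark that the $r \neq 0$ case is precisely case~(ii) of Proposition~\ref{prop:collinear}, which already excludes monotone maps; the corollary extends this to arbitrary maps of $s_o$.
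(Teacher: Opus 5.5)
Your main line is correct and follows the paper's own reasoning. The paper gives no separate proof, only the inline justification ``objects with equal $s_o$ carry different $\Delta$''. Your substitution, the use of richness for the ``iff'', the split $w_n - w_o = \alpha w_o + r$ with $\gamma = 1+\alpha$ and $\delta = b_n - \gamma b_o$, and the tie $z' = z + \varepsilon r$ inside an open ball are that reasoning made explicit. The sign slip you flag is harmless, since the difference is $\varepsilon\|r\|^2$. You are also right that cone-richness cannot produce exact ties, so the open-ball condition is what the ``no function of $s_o$'' clause actually needs.

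The fallback you sketch for bare cone-richness is false as stated. A cone intersected with a slab $|w_o^{\top}u| < \eta$ is not a cone, so richness supplies no difference vector in it. Richness is also scale-free, so it cannot deliver gap differences of ``a fixed amount''.

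The conclusion itself fails. Take $K=2$, $w_o=(1,0)$, $w_n=(1,1)$, $b_o=b_n=0$, and $Z=\{(t,\sqrt{|t|}) : t\in\mathbb{R}\}$. The secant slopes of $t\mapsto\sqrt{|t|}$ take every real value and are unbounded of both signs, so $Z-Z$ meets every nonempty open cone. Also $r=(0,1)\neq 0$. Yet $s_o=t$ is injective on the objects, and $s_n=\phi(s_o)$ exactly with $\phi(t)=t+\sqrt{|t|}$, which is uniformly continuous.

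What cone-richness does give is weaker. Apply it to the open cones $\{u : |w_o^{\top}u| < \theta\, r^{\top}u\}$. These yield pairs whose ratio $|s_n(x)-s_n(x')|/|s_o(x)-s_o(x')|$ grows without bound as $\theta\to 0$, which excludes only Lipschitz $\phi$.

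The same example shows that the corollary's ``no function of $s_o$'' claim is false under cone-richness alone. You should therefore state the open-ball assumption, or an explicit tie across $w_o^{\perp}$, as a hypothesis of the corollary. Do not offer the approximate version as a substitute for it.
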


\textbf{Empirical correspondences.} The model's three degrees of freedom map onto the paper's measurements as follows. \emph{Origin:} the mean gap is repairable by fitting on representative objects (pool-refit explained fraction $0.76$--$0.81$ on Qwen; Table~\ref{tab:refit}), as any $b$-change is. \emph{Geometry:} the sub-$8\%$ dispersion absorption (Appendix~\ref{app:refit}), the tertile sign reversal (Gemma $+0.80$ vs.\ $-0.43$ SD; Appendix~\ref{app:pool}), and the excess top-30\% flips beyond the same-version noise floor ($+12$--$35$ points; Appendix~\ref{app:api4th}) are the three signatures Corollary~\ref{cor:gap} and Proposition~\ref{prop:collinear}(ii) predict for $w_n \not\parallel w_o$ and rule out for a pure origin change. \emph{Resolution:} a near-flat $g_n$ over the realized range produces near-constant scores regardless of $w_n^{\top} z$, while upstream channels remain intact, which is the measured Mistral~3.2 pattern (object-specific rationales, discriminative defect counts, saturated scalar map; Appendix~\ref{app:forensics}); under collapse, recalibrating $s_o$ is vacuous and the instrument fails outright.

\textbf{Scope.} Three idealizations bound the reading. First, $z(x)$ is latent: the data identify the class of change through the signatures above, not the weights themselves; estimating $\Delta w$ requires interventions that move one criterion at a time with matched placebo edits, which our design does not include. Second, linearity is assumed; a nonlinear aggregator changes the algebra but not the moral, since the proof only needs two objects whose relative order the versions disagree on. Third, $z(x)$ is taken version-invariant; the rationale coding of Appendix~\ref{app:structural} suggests the versions partly \emph{read} objects differently (content-shift majorities on two chains), a version-dependent $z_v(x)$ that global scalar maps can represent even less. The G7 nulls (compression and de-hedging leave the gap unchanged) narrow the changed directions to content properties that survive both edits, consistent with criterion-level rather than surface-level movement.

\end{document}